\newtheorem{theorem}{Theorem}
\newtheorem{proposition}{Proposition}[section]
\newtheorem{lemma}[proposition]{Lemma}
\newtheorem{assumption}[proposition]{Assumption}
\newcommand{\im}{\text{\upshape Im} \,}
 \newcommand\ben{\begin{equation*}}
 \newcommand\ebn{\end{equation*}}
 \newcommand\beq{\begin{equation}}
 \newcommand\eeq{\end{equation}}
 \newcommand\lb{\left(}
  \newcommand\rb{\right)} 
\numberwithin{equation}{section}
\title{The Virasoro fusion kernel and Ruijsenaars' hypergeometric function}
\author{Julien Roussillon}
\address{Department of Mathematics, KTH Royal Institute of Technology, \\ 100 44 Stockholm, Sweden.}
\email{julienro@kth.se}
\begin{document}
\begin{abstract}
We show that the Virasoro fusion kernel is equal to Ruijsenaars' hypergeometric function up to normalization. More precisely, we prove that the Virasoro fusion kernel is a joint eigenfunction of four difference operators. We find a renormalized version of this kernel for which the four difference operators are mapped to four versions of the quantum relativistic hyperbolic Calogero-Moser Hamiltonian tied with the root system $BC_1$. We consequently prove that the renormalized Virasoro fusion kernel and the corresponding quantum eigenfunction, the (renormalized) Ruijsenaars hypergeometric function, are equal.
\end{abstract}
\maketitle
\tableofcontents

\section{Introduction}

Two-dimensional conformal field theories (CFTs) have been intensively studied since the seminal work of Belavin, Polyakov and Zamolodchikov in \cite{BPZ}. In addition to their broad range of physical applications in for example condensed matter theory and string theory, they possess rich underlying mathematical structures. 

The infinite-dimensional symmetry algebra of 2D CFTs determines the structure of their correlation functions. The operator product expansion makes it possible to decompose $N$-point correlators on the Riemann sphere into combinations of three-point structure constants and universal quantities called \textit{Virasoro conformal blocks}. In particular, there are three ways to decompose the four-point correlation function. The universal parts of each decomposition are called $s$-, $t$-, and $u$-channel conformal blocks. Since each decomposition must lead to the same four-point function, the three kinds of conformal blocks are related by duality transformations.

In particular, the $s$- and $t$-channel conformal blocks are related by an integral transform called fusion transformation. Such a transformation was conjectured in \cite{pt1} and established in \cite{T03}. The corresponding kernel, the \textit{Virasoro fusion kernel}, was constructed by Ponsot and Teschner in \cite{pt1,pt2} and revisited in \cite{TV14} as \textit{$b$-6j symbols} for the modular double of $\mathcal{U}_q(sl_2(\mathbb{R}))$. This terminology comes from the fact that $b$ is associated to two unimodular parameters $q=e^{i\pi b^2}$ and $\tilde{q}=e^{i\pi b^{-2}}$. The Virasoro fusion kernel also appears in quantum Teichm\"uller theory \cite{NT}, in 3d supersymmetric gauge theories \cite{TV14,TY}, and recent physical applications were found in \cite{CBMP,CL,K}.

From the conformal blocks viewpoint, the Virasoro fusion kernel is denoted $F\lb b,\boldsymbol\theta,\sigma_s,\sigma_t \rb$. Here, $b$ is used to characterize the central charge of the CFT as $c=1+6(b+b^{-1})^2$, while $\boldsymbol\theta$ is a set of four external conformal dimensions associated to the fields entering the four-point correlation function. Finally, $\sigma_s$ and $\sigma_t$ are two internal conformal dimensions associated to the s- and t-channel decompositions of the four-point function.

The Ponsot-Teschner formula \cite{pt1} for $F$ is recalled in \eqref{fusion01}. It is given by a contour integral whose integrand involves a special function $s_b(z)$. The $s_b$-function appears in several different contexts. It was introduced under the names of "Quantum Dilogarithm function", "Hyperbolic Gamma function" and "Quantum Exponential function" in \cite{FK,W2000,R1996}, respectively\footnote{The hyperbolic gamma function first appeared in \cite{R1994} but was named as such in \cite{R1996}.}. In particular, $F$ is proportional to a \textit{hyperbolic Barnes integral} which is a hyperbolic generalization of the Barnes representation for the Gauss hypergeometric function. Such an integral was constructed in \cite{BRS} as a degeneration limit of the hyperbolic hypergeometric function, and it also arises as a limit of Spiridonov's elliptic hypergeometric $V$-function \cite{TV14,Sp}. Finally, the Virasoro fusion kernel is a joint eigenfunction of four difference operators \cite{pt1}.

On the other hand, Calogero-Moser (CM) models are systems of particles living on the real line or the circle and interacting with a rational, trigonometric/hyperbolic or elliptic potential. Their relativistic deformations were found by Ruijsenaars and Schneider at the classical level \cite{RS}. Quantum relativistic CM systems tied with the root system $A_{N-1}$ were found by Ruijsenaars \cite{R87}, and their $BC_N$ generalization by Van Diejen \cite{V94}. Therefore we will refer to quantum relativistic $BC_N$ CM systems as quantum Ruijsenaars-van Diejen (RvD) systems.

The rank $N$ quantum trigonometric RvD system is solved by the Koornwinder N-variable polynomials \cite{Koo}. However, in the hyperbolic case the eigenfunctions are non-polynomial and are only known in the rank one. The corresponding eigenfunction, the \textit{Ruijsenaars hypergeometric function}, was introduced in \cite{R1994} and studied in greater detail in \cite{R1999,R2003,R2003bis}. This function is denoted $R(a_-,a_+,\boldsymbol{\gamma},v,\hat{v})$; here, $a_-$ and $a_+$ are associated to two unimodular quantum deformation parameters $q=e^{i\pi a_-/a_+}$ and $\tilde{q}=e^{i \pi a_+/a_-}$, while $\boldsymbol\gamma$ is a set of four external couplings constants. Finally,  $v$ and $\hat v$ are viewed as geometric and spectral variables, respectively. 

Of particular importance is the renormalized $R$-function, denoted $R_\text{ren}(a_-,a_+,\boldsymbol{\gamma},v,\hat{v})$, which was defined in \cite{R1999}. The definition \eqref{fonctionR} and the properties of the function $R_\text{ren}$ resemble those of the Virasoro fusion kernel. First of all, $R_\text{ren}$ is defined as a contour integral whose integrand involves the hyperbolic gamma function $G(a_-,a_+,z)$. The functions $G$ and $s_b$ are simply related by 
\begin{equation*} s_b(z) = G(b,b^{-1},z). \end{equation*} 
$R_\text{ren}$ is also proportional to a hyperbolic Barnes integral \cite{BRS}. Moreover, it is a joint eigenfunction of four difference operators which are four versions of the rank one quantum hyperbolic RvD Hamiltonian. Let us finally mention that the function $R_\text{ren}$ was related to the modular double of $\mathcal{U}_q(sl_2(\mathbb{R}))$ in \cite{B}.

The discussion above suggests  that the Virasoro fusion kernel and the Ruijsenaars hypergeometric function are closely related. However there are at least three obstacles to finding an exact relation between them:
\begin{enumerate}
\item $F\lb b,\boldsymbol\theta,\sigma_s,\sigma_t \rb$ and $R_\text{ren}(a_-,a_+,\boldsymbol{\gamma},v,\hat{v})$ are expressed in terms of different parameters.
\item There is a normalization ambiguity in the definitions of $F$ and $R_\text{ren}$. For instance, the definition of $F$ depends on the normalization of the s- and t-channel conformal blocks.
\item Both $F$ and $R_\text{ren}$ possess several equivalent representations because the hyperbolic Barnes integral satisfies various nontrivial identities.
\end{enumerate}

The aim of this article is to overcome these obstacles and to show that the Virasoro fusion kernel and the Ruijsenaars hypergeometric function are the same function up to normalization. The identification procedure can be summarized as follows. We prove in Propositions \ref{prop1} and \ref{prop4p2} that the Virasoro fusion kernel is a joint eigenfunction of four difference operators. We provide the Ruijsenaars/CFT parameters identification in Section \ref{section5p1}. We define in \eqref{tildeF} a renormalized version $F_\text{ren}$ of $F$ and we show in Proposition \ref{prop2} that both $R_\text{ren}$ and $F_\text{ren}$ satisfy the same four difference equations. Therefore, the two functions are proportional. We finally show in Theorem \ref{theoremRF} that they are actually equal.


%
%
%

\subsection{Organization of the paper} The Ruijsenaars hypergeometric function and various of its properties are recalled in Section \ref{section2}. Section \ref{section3} reviews some properties of the four-point Virasoro conformal blocks and introduces the Virasoro fusion kernel in this context. In Section \ref{section4} we study various symmetry and eigenfunction properties of $F$. Finally, our main result --- the identification of $F$ with $R_\text{ren}$ --- is presented in Section \ref{section5}.

\section{Ruijsenaars' hypergeometric function}\label{section2}

This section is a brief overview of the (renormalized) Ruijsenaars hypergeometric function following \cite{R1999}. We recall several of its properties which we will need to relate it to the Virasoro fusion kernel.
\subsection{Definition}
Define two sets of external couplings constants $\boldsymbol\gamma$ and $\hat{\boldsymbol\gamma}$ by
\beq\label{gamma}
\boldsymbol\gamma=\begin{pmatrix}\gamma_0\\ \gamma_1 \\ \gamma_2\\ \gamma_3\end{pmatrix}, \qquad \hat{\boldsymbol\gamma}=\begin{pmatrix}\hat\gamma_0\\ \hat\gamma_1 \\ \hat\gamma_2\\ \hat\gamma_3\end{pmatrix}= J \boldsymbol\gamma,\eeq
where $J$ is a matrix satisfying $J^2=\mathcal{I}_4$ and is defined by
\beq\label{J}
J=\frac{1}{2} \begin{pmatrix}1 & 1 & 1 & 1 \\ 1 & 1 & -1 &-1\\1&-1&1&-1\\1&-1&-1&1  \end{pmatrix}.
\eeq
The renormalized $R$-function denoted $R_\text{ren}(a_-,a_+,\boldsymbol{\gamma},v,\hat{v})$ is given by a contour integral whose integrand involves the hyperbolic gamma function $G(a_-,a_+,z)$. The definition of $G$ and its properties can be found in Appendix \ref{appendixA}. Its dependence on $a_-$ and $a_+$ will be omitted for simplicity. Following \cite{R1999}, the function $R_\text{ren}$ is defined for $(a_-,a_+,\boldsymbol\gamma) \in \text{RHP}^2 \times \mathbb{C}^4$, where RHP denotes the open right-half plane, such that
\begin{equation}\label{fonctionR}\begin{split}
&R_{\text{ren}}(a_-,a_+,\boldsymbol{\gamma},v,\hat{v})=\frac{1}{\sqrt{a_+ a_-}} \prod_{\epsilon=\pm} \frac{G(\epsilon v-i \gamma_0)}{G(\epsilon \hat{v}+i \hat{\gamma}_0)}\displaystyle \int_{\mathsf{R}} dz \frac{\prod_{\epsilon=\pm}G(z + \epsilon v+i \gamma_0)G(z + \epsilon \hat{v}+i \hat{\gamma}_0)}{G(z+i a)\prod_{j=1}^3 G(z+i \gamma_0+i \gamma_j+i a)},
\end{split}\end{equation}
and where $a=\frac{a_++a_-}{2}$. The function $G$ has sequences of poles and zeros given in \eqref{polesGhyp} and \eqref{zeroGhyp}, respectively. Therefore, for instance when $a_+,a_->0$ the integrand in \eqref{fonctionR} has eight semi-infinite lines of poles, four of them increasing and the other ones decreasing. The integration contour $\mathsf{R}$ runs from $-\infty$ to $+\infty$ separating the upper and lower sequences of poles. Moreover, for fixed values of $a_-,a_+$ and $\boldsymbol\gamma$, the function $R_\text{ren}$ is meromorphic in $v$ and $\hat{v}$. Let us also note that the functions $R$ and $R_\text{ren}$ are related as follows:
\beq
R(a_-,a_+,\boldsymbol{\gamma},v,\hat{v}) = R_\text{ren}(a_-,a_+,\boldsymbol{\gamma},v,\hat{v}) \prod_{j=1}^3 G(i \gamma_0+i\gamma_j+ia).
\eeq
Finally, it was shown in \cite[Proposition 4.20]{BRS}~that the function $R_\text{ren}$ is proportional to a hyperbolic Barnes integral $\mathcal{B}_h(a_-,a_+,\boldsymbol{u})$ whose definition is recalled in Appendix \ref{appendixB}. $\mathcal{B}_h(a_-,a_+,\boldsymbol{u})$ is one possible degeneration of the hyperbolic hypergeometric function and other types of degenerations led the authors of \cite{BRS} to find other representations for $R_\text{ren}$. 
\subsection{Symmetry properties}

The function $R_\text{ren}$ possesses various discrete symmetries which follow from the representation \eqref{fonctionR}. In view of the property \eqref{scale} of the hyperbolic gamma function, $R_\text{ren}$ is scale invariant:
\beq\label{scaleR}
 R_{\text{ren}}(\lambda a_-,\lambda a_+,\lambda \boldsymbol{\gamma},\lambda v,\lambda \hat{v}) = R_{\text{ren}}(a_-,a_+,\boldsymbol{\gamma},v,\hat{v}), \qquad \lambda > 0.
\eeq
Moreover, because of the fact that $G(a_-,a_+,z)=G(a_+,a_-,z)$, the following identity holds:
\beq\label{modulardual}
 R_{\text{ren}}(a_-,a_+,\boldsymbol{\gamma},v,\hat{v})=R_{\text{ren}}(a_+,a_-,\boldsymbol{\gamma},v,\hat{v}) \qquad \text{(modular symmetry).}
 \eeq
It can also be verified that $R_\text{ren}$ satisfies
 \beq \label{selfdual}  R_{\text{ren}}(a_-,a_+,\boldsymbol{\gamma},v,\hat{v})=R_{\text{ren}}(a_-,a_+,\hat{\boldsymbol{\gamma}},\hat{v},v) \qquad \text{(self-duality).}
\eeq
Finally, the function $R_\text{ren}$ is even in $v,\hat{v}$ and is symmetric under permutations of $\gamma_1,\gamma_2,\gamma_3$. It was shown in \cite[Theorem 1.1]{R2003bis} that a similarity transformed version of $R_\text{ren}$ has an extended $D_4$-symmetry in $\boldsymbol\gamma$. 

\subsection{Eigenfunction properties}
Define a translation operator $e^{\pm i a_- \partial_x}$ which formally acts on meromorphic functions $f(x)$ as $e^{\pm i a_- \partial_x} f(x)=f(x\pm i a_-)$. The quantum relativistic Calogero-Moser Hamiltonian tied with the root system $BC_1$ and with a hyperbolic interaction was defined in \cite{V94}. As mentioned in the introduction, it will be referred to as the Ruijsenaars-van Diejen (RvD) Hamiltonian. It is a difference operator of the form 
\begin{equation}\label{hamiltonienR}
H_{RvD}(a_-,a_+,\boldsymbol{\gamma},x)=C(a_-,a_+,\boldsymbol{\gamma},x)e^{-i a_- \partial_x}+C(a_-,a_+,\boldsymbol{\gamma},-x)e^{i a_- \partial_x} + V(a_-,a_+,\boldsymbol{\gamma},x),
\end{equation}
where 
\begin{equation}\label{fonctionC}
C(a_-,a_+,\boldsymbol{\gamma},x)=
-\frac{4\prod_{k=0}^3\operatorname{cosh}{\lb\frac{\pi}{a_+}(x-i\gamma_k-\frac{i a_-}{2})\rb}}{\operatorname{sinh}{\lb\frac{2\pi x}{a_+}\rb}\operatorname{sinh}{\lb\frac{2\pi}{a_+}(x-\frac{i a_-}{2})\rb}}, 
\end{equation}
and where the potential $V$ is defined by
\beq \label{potentialV}
V(a_-,a_+,\boldsymbol{\gamma},x)=-C(a_-,a_+,\boldsymbol{\gamma},x)-C(a_-,a_+,\boldsymbol{\gamma},-x)-2\operatorname{cos}{\lb \frac{\pi}{a_+}\lb \sum_{\mu=0}^3 \gamma_\mu+a_-\rb\rb}.
\eeq
The function $R_\text{ren}$ is a joint eigenfunction of four versions of this Hamiltonian \cite[Theorem 3.1]{R1999}. More precisely, for $(a_-,a_+,\boldsymbol{\gamma},v,\hat{v})\in \text{RHP}^2\times\mathbb{C}^6$ the following eigenvalue equations hold:
\begin{subequations}\label{eigenvaluesR}\begin{align}
\label{differenceR1} & H_{RvD}(a_-,a_+,\boldsymbol{\gamma},v)R_{\text{ren}}(a_-,a_+,\boldsymbol{\gamma},v,\hat{v})=2\operatorname{cosh}{\lb\tfrac{2\pi \hat{v}}{a_+}\rb}R_{\text{ren}}(a_-,a_+,\boldsymbol{\gamma},v,\hat{v}), \\
\label{differenceR2} & H_{RvD}(a_+,a_-,\boldsymbol{\gamma},v)R_{\text{ren}}(a_-,a_+,\boldsymbol{\gamma},v,\hat{v})=2\operatorname{cosh}{\lb\tfrac{2\pi \hat{v}}{a_-}\rb}R_{\text{ren}}(a_-,a_+,\boldsymbol{\gamma},v,\hat{v}), \\
\label{differenceR3} & H_{RvD}(a_-,a_+,\hat{\boldsymbol{\gamma}},\hat{v})R_{\text{ren}}(a_-,a_+,\boldsymbol{\gamma},v,\hat{v})=2\operatorname{cosh}{\lb\tfrac{2\pi v}{a_+}\rb}R_{\text{ren}}(a_-,a_+,\boldsymbol{\gamma},v,\hat{v}), \\
\label{differenceR4} & H_{RvD}(a_+,a_-,\hat{\boldsymbol{\gamma}},\hat{v})R_{\text{ren}}(a_-,a_+,\boldsymbol{\gamma},v,\hat{v})=2\operatorname{cosh}{\lb\tfrac{2\pi v}{a_-}\rb}R_{\text{ren}}(a_-,a_+,\boldsymbol{\gamma},v,\hat{v}). 
\end{align}\end{subequations}

It can be noted that \eqref{differenceR1} implies \eqref{differenceR2}, \eqref{differenceR3} and \eqref{differenceR4} thanks to the properties \eqref{modulardual} and \eqref{selfdual} of $R_\text{ren}$.

Let us finally mention that solving a quantum relativistic integrable system consists of two steps. A joint eigenfunction $J$ first needs to be found by solving a system of difference equations. Then, a Hilbert space transform associated to $J$ has to be constructed. The latter is a quantum counterpart of the action-angle transform for classical integrable systems. In the case of the rank one quantum hyperbolic RvD system, the second step of this program was realized in \cite{R2003} for special values of the couplings.

\section{Four-point Virasoro conformal blocks}\label{section3}

In this section we briefly review the four-point Virasoro conformal blocks and recall how the Virasoro fusion kernel arises in this context. A more complete overview of the subject can be found in \cite{T01,T03,Ribault}. 
 


\subsection{Highest-weight representations of the Virasoro algebra} The Virasoro algebra is the symmetry algebra of two-dimensional conformal field theories. It has generators $L_n$, $n\in \mathbb{Z}$ and relations
\beq \label{virasoro}
[L_n,L_m]=(n-m)L_{n+m} + \frac{c}{12}n(n^2-1)\delta_{n+m,0},\eeq
where $c$ is a central element in the algebra called central charge. Highest weight representations $\mathcal{V}_\theta$ of the Virasoro algebra are generated from vectors $\left|\theta\right>$ which satisfy
\beq
L_0\left|\theta\right> = \Delta(\theta) \left|\theta\right>,\qquad L_n\left|\theta\right> = 0, \quad n>0,
\eeq
where a Liouville type parametrization of $\Delta$ and $c$ is used:
\beq\label{liouvilleparam}
\Delta(x)=\tfrac{Q^2}4+x^2, \qquad c=1+6Q^2,\qquad Q=b+b^{-1}.
\eeq
The representations $\mathcal{V}_\theta$ decompose as follows:
\beq
\mathcal{V}_\theta \simeq \bigoplus_{n\in \mathbb{Z} \geq 0} \mathcal{V}_\theta^{(n)},
\eeq
where $L_0 v = (\Delta(\theta)+n) v$ for all $v \in \mathcal{V}_\theta^{(n)}$.
A chiral vertex operator $V^{\Delta(\theta_0)}_{\Delta(\theta_2),\Delta(\theta_1)}(z)$ is a map $V^{\Delta(\theta_0)}_{\Delta(\theta_2),\Delta(\theta_1)} : \mathcal{V}_{\theta_1}\rightarrow \mathcal{V}_{\theta_2}$. It is defined by the following relations:

\begin{subequations}\label{chiral}\begin{align}
\label{commutation}& [L_n,V^{\Delta(\theta_0)}_{\Delta(\theta_2),\Delta(\theta_1)}(z)]=z^n \lb z \partial_z+\Delta(\theta_0)(n+1)\rb V^{\Delta(\theta_0)}_{\Delta(\theta_2),\Delta(\theta_1)}(z),
	\\
\label{normalizationvertex} & V^{\Delta(\theta_0)}_{\Delta(\theta_2),\Delta(\theta_1)}(z)\left|\theta_1 \right>=N(\theta_2,\theta_0,\theta_1)~z^{\Delta(\theta_2)-\Delta(\theta_1)-\Delta(\theta_0)} \sum_{n\in \mathbb{Z}_{\geq 0}} v_n z^n,
\end{align}\end{subequations}
where $v_n \in \mathcal{V}_{\theta_2}$ and $v_0=\left| \theta_2\right>$, and where $N$ is a normalization factor which will be specified in Section \ref{section5}. 

\subsection{Four-point Virasoro conformal blocks} We now define a set of external conformal dimensions $\boldsymbol\theta$ by

\beq\label{theta}\boldsymbol\theta=\begin{pmatrix}\theta_0\\ \theta_t \\ \theta_1 \\ \theta_\infty \end{pmatrix}.\eeq 
The four-point Virasoro conformal block is defined by the expectation value of a composition of two chiral vertex operators as follows:
 \beq \label{CB}
 \mathcal F\lb b,\boldsymbol\theta,\sigma_s;z \rb = \frac{1}{N(\theta_\infty,\theta_1,\sigma_s)N(\sigma_s,\theta_t,\theta_0)}  \left\langle \theta_\infty\right|V_{\Delta(\theta_\infty),\Delta(\sigma_s)}^{\Delta(\theta_1)}\lb 1\rb V_{\Delta(\sigma_s),\Delta(\theta_0)}^{\Delta(\theta_t)}\lb z \rb\left|\theta_0\right\rangle.
 \eeq
The parameters $\boldsymbol\theta$ and $\sigma_s$ will be referred to as external and internal momenta. The block $\mathcal F$ admits a series expansion in $z$ which can be computed recursively using \eqref{chiral} and \eqref{virasoro}:
\beq\label{CBseries}
\mathcal F\lb b,\boldsymbol\theta,\sigma_s;z \rb = z^{\Delta(\sigma_s)-\Delta(\theta_0)-\Delta(\theta_t)}\lb1+\tfrac{(\Delta(\sigma_s)+\Delta(\theta_1)-\Delta(\theta_\infty))(\Delta(\sigma_s)+\Delta(\theta_t)-\Delta(\theta_0))}{2\Delta(\sigma_s)} z + \sum_{k=2}^\infty c_k z^k\rb.
\eeq
The discovery of the AGT relation \cite{AGT} between two-dimensional conformal field theories and four-dimensional supersymmetric gauge theories led to a closed-form expression for all the coefficients $c_k$. In our notation, the complete expansion of \eqref{CBseries} can be found in \cite[Eq. (3.1)]{LR}.

Several conjectures exist for the analytic properties of four-point Virasoro conformal blocks. The series in \eqref{CBseries} is believed to be convergent inside the unit disk $|z|<1$. Moreover, the only singularities of the conformal blocks as a function of $z$ are expected to be branch points at $0,1,\infty$ \cite{Zam87,HJP}. Under this assumption, conformal blocks are naturally defined for $z \in \mathbb{C} \setminus((-\infty,0 ] \cup [1,\infty))$. Finally, they are believed to be analytic in $\boldsymbol\theta$ and meromorphic in $\sigma_s$, with the only possible poles located at $\pm \sigma_s^{(m,n)}=-\frac{i}{2}(m b+n b^{-1})$ for $m,n \in \mathbb{Z}_{>0}$.
\subsection{Crossing transformations}

The linear span of four-point Virasoro conformal blocks forms an infinite-dimensional representation of $\Gamma(\Sigma_{0,4})=\text{PSL}_2(\mathbb{Z})$, the mapping class group of the four-puncture Riemann sphere. It is generated by the braiding $B$ and fusion $F$ moves, such that $F^2=(BF)^3=1$. The three ways of splitting four points on $\mathbb{C}\mathbb{P}^1$ into two pairs define the $s$-, $t$-, and $u$-channel bases for the space of conformal blocks. The cross-ratio argument of conformal blocks in these channels are chosen from $\{z,\frac{z}{z-1}\}$,$\{1-z,\frac{z-1}{z}\}$, and $\{\frac{1}{z},\frac{1}{1-z}\}$, respectively. 
The braiding move $B$ acts on the $s$-channel conformal blocks as follows:
$$
\mathcal F\lb b,\boldsymbol\theta,\sigma_s;z \rb = e^{\pm i\pi \lb \Delta(\sigma_s)-\Delta(\theta_0)-\Delta(\theta_t)\rb} \lb 1-z\rb^{-2\Delta(\theta_t)} \mathcal F\lb b,\tilde{\boldsymbol\theta},\sigma_s;\tfrac{z}{z-1} \rb,\quad \im z \gtrless 0,
$$
where $\tilde{\boldsymbol\theta}$ is obtained by performing the permutation $\theta_1 \leftrightarrow \theta_\infty$ on $\boldsymbol{\theta}$.
On the other hand, the fusion move was conjectured in \cite{pt1} and established in \cite{T03}. It is represented by the integral transform
 \beq \label{s-t}
 \qquad \mathcal F\lb b,\boldsymbol\theta,\sigma_s;z \rb = \int_{\mathbb R_+}d\sigma_t ~ 
  F\lb b,\boldsymbol\theta,\sigma_s,\sigma_t \rb
    \mathcal F\lb b,\hat{\boldsymbol\theta},\sigma_t;1-z \rb,
 \eeq
where $\hat{\boldsymbol\theta}$ is obtained by performing the permutation $\theta_0 \leftrightarrow \theta_1$ on $\boldsymbol{\theta}$. It will be convenient to explicitly write $\hat{\boldsymbol\theta}$  as
\beq\label{dualtheta}
\hat{\boldsymbol\theta} = K \boldsymbol \theta, \qquad K=\begin{pmatrix} 0 & 0 & 1 & 0 \\ 0 & 1 & 0 & 0 \\ 1 & 0 & 0 & 0 \\ 0 & 0 & 0 & 1 \end{pmatrix}.
\eeq
Finally, the kernel $F\lb b,\boldsymbol\theta,\sigma_s,\sigma_t \rb$ of the fusion move is the Virasoro fusion kernel. 

\section{The Virasoro fusion kernel}\label{section4}

In this section we recall the Ponsot-Teschner formula \cite{pt1} for the Virasoro fusion kernel. Let us also note that other representations for $F$ were found in \cite{TV14}. Moreover, we describe various symmetry and eigenfunction properties of $F$ which will be useful in its identification with the Ruijsenaars hypergeometric function \eqref{fonctionR}.

\subsection{Definition}

The Virasoro fusion kernel is defined by a contour integral involving two special functions $s_b(x)=g_b(x)/g_b(-x)$ and $g_b(x)$ which are closely related to the hyperbolic gamma function $G(a_-,a_+,x)=E(a_-,a_+,x)/E(a_-,a_+,-x)$ and $E(a_-,a_+,x)$, respectively. The exact relations are given in \eqref{sbgbGE}. The Ponsot-Teschner formula reads
\beq \label{fusion01}
\begin{split}
F\lb b,\boldsymbol\theta,\sigma_s,\sigma_t \rb = &  \prod_{\epsilon,\epsilon'=\pm} \frac{g_b \lb \epsilon \theta_1+\theta_{t}+\epsilon' \sigma_t\rb g_b \lb \epsilon \theta_0-\theta_\infty+\epsilon' \sigma_t \rb}{g_b \lb \epsilon \theta_0 + \theta_t + \epsilon' \sigma_s \rb g_b \lb \epsilon \theta_1-\theta_\infty+\epsilon' \sigma_s \rb} \prod_{\epsilon=\pm} \frac{g_b(\frac{iQ}2+2\epsilon \sigma_s)}{g_b(-\frac{iQ}2+2\epsilon \sigma_t)}
	\\
& \times \int_{\mathsf{F}} dx~\prod_{\epsilon=\pm1} \frac{s_b \lb x+ \epsilon \theta_1 \rb s_b \lb x+\epsilon\theta_0+\theta_\infty+\theta_t \rb}{s_b \lb x+\frac{iQ}{2}+\theta_\infty+\epsilon \sigma_s \rb s_b \lb x+\frac{i Q}{2}+\theta_t+\epsilon \sigma_t \rb}.
\end{split}
\eeq
When $b>0$, the integrand in (\ref{fusion01}) has eight vertical semi-infinite lines of poles, four of them increasing and the other four decreasing; the integration contour $\mathsf{F}$ runs from $-\infty$ to $+\infty$, separating the upper and lower sequences of poles. More generally, the fusion kernel \eqref{fusion01} can be extended to a meromorphic function of all of its parameters provided that $c \in \mathbb{C} \setminus \mathbb{R}_{\leq1}$, which corresponds to $b\notin i \mathbb{R}$. 

In the remainder of this article, we will make the following assumption on the parameters.
\begin{assumption}[Restriction on the parameters]\label{assumption}
We assume that
\beq 
0 < b < 1,\quad \boldsymbol\theta \in \mathbb{R}^4, \quad (\sigma_s,\sigma_t) \in \mathbb{R}^2.
\eeq
\end{assumption}

Assumption \ref{assumption} implies that the integration contour $\mathsf{F}$ is any curve going from $-\infty$ to $+\infty$ lying in the strip $\im x\in]-\tfrac{iQ}2,0[$. Moreover, Assumption \ref{assumption} is made primarily for simplicity; we expect all the results of this article to admit an analytic continuation to more general values of the parameters, such as $b\in \mathbb{C}\backslash i\mathbb{R},~ \boldsymbol\theta \in \mathbb{C}^4$ and $(\sigma_s,\sigma_t) \in \mathbb{C}^2$.

We are now going to derive various symmetry properties of the Virasoro fusion kernel.

\subsection{Symmetry properties} First, because the Virasoro conformal block \eqref{CBseries} is a function of the conformal dimensions $\Delta(x)=\frac{Q^2}{4}+x^2$, it follows that it is even in $\boldsymbol\theta$ and $\sigma_s$. Therefore the fusion transformation \eqref{s-t} implies that the Virasoro fusion kernel $F\lb b,\boldsymbol\theta,\sigma_s,\sigma_t \rb$ is even in $\boldsymbol\theta$ and $\sigma_s,\sigma_t$. Second, because $s_b(x)=s_{b^{-1}}(x)$ and $g_b(x)=g_{b^{-1}}(x)$, $F$ has the following symmetry:
\beq\label{modularF}
F\lb b,\boldsymbol\theta,\sigma_s,\sigma_t \rb = F\lb b^{-1},\boldsymbol\theta,\sigma_s,\sigma_t \rb.
\eeq
We now describe a duality transformation exchanging the internal momenta $\sigma_s$ and $\sigma_t$.
 \begin{proposition} \label{lemma1} 
The Virasoro fusion kernel given in \eqref{fusion01} satisfies 
\beq \label{identityF}
F\lb b,\boldsymbol\theta,\sigma_s,\sigma_t \rb = \frac{\alpha(\boldsymbol\theta,\sigma_s)\alpha(\boldsymbol\theta,-\sigma_s)}{\alpha(\hat{\boldsymbol\theta},\sigma_t)\alpha(\hat{\boldsymbol\theta},-\sigma_t)}
F\lb b,\hat{\boldsymbol\theta},\sigma_t,\sigma_s \rb ,
\eeq
where $\hat{\boldsymbol\theta}$ is related to $\boldsymbol\theta$ as in \eqref{dualtheta} and
\beq\label{alpha}
\alpha(\boldsymbol\theta,\sigma_s)=\frac{g_b\left(2\sigma_s+\frac{iQ}2 \right)g_b\left(2\sigma_s-\frac{iQ}2 \right)}{\prod _{\epsilon,\epsilon'=\pm} g_b(\sigma_s+\epsilon\theta_1 +\epsilon'\theta_\infty) g_b(\sigma_s+\epsilon\theta_0+\epsilon'\theta_t)}.
\eeq
\end{proposition}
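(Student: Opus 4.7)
The aim is to establish the identity \eqref{identityF} by directly comparing the Ponsot–Teschner contour integral \eqref{fusion01} for $F(b,\boldsymbol\theta,\sigma_s,\sigma_t)$ with the same formula applied to $F(b,\hat{\boldsymbol\theta},\sigma_t,\sigma_s)$. The statement can be recast as saying that the renormalized kernel $\tilde F := F/[\alpha(\boldsymbol\theta,\sigma_s)\alpha(\boldsymbol\theta,-\sigma_s)]$ is invariant under the involution $(\boldsymbol\theta,\sigma_s,\sigma_t)\mapsto(\hat{\boldsymbol\theta},\sigma_t,\sigma_s)$, so this should be a symmetry of a ``tetrahedral'' nature of the $b$-6j symbol, as in \cite{TV14}.

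The plan is as follows. I would first perform a shift of the integration variable $x\mapsto x-\tfrac{iQ}{2}-\theta_\infty$ in \eqref{fusion01}; this removes the $iQ/2$ offsets in the denominator and brings the eight $s_b$-arguments into a more symmetric form. The same shift applied to the integral representing $F(b,\hat{\boldsymbol\theta},\sigma_t,\sigma_s)$ produces an integrand with exactly the same structure but with the transpositions $\theta_0\leftrightarrow\theta_1$ and $\sigma_s\leftrightarrow\sigma_t$ applied to the arguments. The next step is to convert the first integrand into the second one, using a combination of the reflection identity $s_b(z)s_b(-z)=1$ and a change of variable $x\mapsto -x+c$ for a carefully chosen constant $c$ (one natural choice is $c=\theta_t-\theta_\infty$, which places the $\sigma_s$ and $\sigma_t$ factors of the denominator in the required positions). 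If the direct change of variable is not enough, an additional contour deformation picking up no residues, or an appeal to a hyperbolic Bailey / Nassrallah–Rahman transformation of the hyperbolic hypergeometric integral (as developed in \cite{BRS}), would be needed to match the two integrands.

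Once the integrands are matched, the two integrals will differ by a multiplicative constant that is a product of $g_b$-functions of the parameters $\boldsymbol\theta, \sigma_s, \sigma_t$. Combining this constant with the explicit prefactors already appearing in \eqref{fusion01} on both sides, one obtains a ratio that must then be identified, using the defining formula \eqref{alpha} for $\alpha$ and simple $g_b$-algebra (essentially repeated uses of $s_b(z)s_b(-z)=1$), with the target expression $\alpha(\boldsymbol\theta,\sigma_s)\alpha(\boldsymbol\theta,-\sigma_s)/[\alpha(\hat{\boldsymbol\theta},\sigma_t)\alpha(\hat{\boldsymbol\theta},-\sigma_t)]$.

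The main obstacle is the integrand-matching step: a purely elementary change of variable appears to swap numerator and denominator roles via $s_b(-z)=1/s_b(z)$ rather than producing the dual integrand directly, so the correct identification likely requires either a non-trivial contour deformation or the explicit invocation of an integral transformation formula for a hyperbolic hypergeometric integral. Provided that step is carried out, the rest of the proof is a systematic but lengthy $g_b$-arithmetic, in which cancellations between the transformation prefactor, the leading ratio in \eqref{fusion01}, and the denominators of the $\alpha$'s conspire to yield precisely \eqref{identityF}.
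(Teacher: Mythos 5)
Your primary route --- matching the two integrands by a shift plus a reflection $x\mapsto -x+c$ --- fails for exactly the reason you suspect: since the product over $\epsilon=\pm$ of $s_b(-x+\epsilon\theta)$ equals $\prod_{\epsilon}s_b(x+\epsilon\theta)^{-1}$, reflection exchanges the numerator ($\theta_0,\theta_1$-dependent) and denominator ($\sigma_s,\sigma_t$-dependent) factors wholesale, whereas the target integrand for $F\lb b,\hat{\boldsymbol\theta},\sigma_t,\sigma_s\rb$ keeps the $\theta$'s upstairs and the $\sigma$'s downstairs. No elementary change of variable or residue-free contour deformation repairs this; the identity is genuinely a nontrivial transformation of the underlying hyperbolic Barnes integral. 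Your fallback is the correct tool and is precisely what the paper does, but it is left entirely unexecuted, and it is the whole content of the proof: one must (a) write the contour integral in \eqref{fusion01} as $\tfrac12\mathcal{B}_h(b,b^{-1},\boldsymbol u)$ for a \emph{specific} choice of the eight parameters $\boldsymbol u\in\mathcal{G}_{iQ}$ (the choice is not unique and must be made so that the next step lands on the dual kernel), (b) apply the involution identity \eqref{identitybarnes} with the $\omega$-action \eqref{actionomega}, which shifts half the parameters by $s=ia-\tfrac12(u_1+\cdots+u_4)$ and produces an explicit product of eight $s_b$-factors, and (c) recognize $\mathcal{B}_h(b,b^{-1},\omega\boldsymbol u)$, after the shift $x\to x+\theta_t$, as the contour integral defining $F\lb b,I\hat{\boldsymbol\theta},\sigma_t,\sigma_s\rb$ with $I=\diag(1,-1,1,1)$. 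Without carrying out (a)--(c) there is no control over the ``multiplicative constant'' you invoke, so the final $g_b$-arithmetic cannot even be started.

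Two further points you did not anticipate. First, step (c) produces the dual kernel with $\theta_t$ replaced by $-\theta_t$, and one must separately invoke the evenness of $F$ in the external momenta (which follows from the evenness of the conformal blocks in $\Delta(x)=\tfrac{Q^2}{4}+x^2$ together with \eqref{s-t}) to remove the sign. Second, the prefactor bookkeeping is not a matter of ``repeated uses of $s_b(z)s_b(-z)=1$'' alone: the ratio $X_1/X_2$ of the two complete prefactors must be computed and shown to equal $\alpha(\boldsymbol\theta,\sigma_s)\alpha(\boldsymbol\theta,-\sigma_s)/[\alpha(\hat{\boldsymbol\theta},\sigma_t)\alpha(\hat{\boldsymbol\theta},-\sigma_t)]$, where the eight $s_b$-factors generated by \eqref{identitybarnes} are an essential ingredient. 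So while your diagnosis of where the difficulty lies is accurate, the proposal stops exactly at the point where the proof begins.
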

\begin{proof}
The proof relies on the fact that the contour integral in \eqref{fusion01} is a hyperbolic Barnes integral $\mathcal{B}_h$ (see \eqref{hyperbolicbarnes} for its definition). In particular, $\mathcal{B}_h$ satisfies the remarkable identity \eqref{identitybarnes} which is crucial for the proof.

Following the definition \eqref{hyperbolicbarnes} and the relation $s_b(z)=G(b,b^{-1},z)$, it is easy to verify that the contour integral in \eqref{fusion01} takes the form 
\beq \label{4p6} \begin{split}
\int_{\mathsf{F}} dx~\prod_{\epsilon=\pm} \frac{s_b \lb x+ \epsilon \theta_1 \rb s_b \lb x+\epsilon\theta_0+\theta_\infty+\theta_t \rb}{s_b \lb x+\frac{iQ}{2}+\theta_\infty+\epsilon \sigma_s \rb s_b \lb x+\frac{i Q}{2}+\theta_t+\epsilon \sigma_t \rb} = \frac12 \mathcal{B}_h(b,b^{-1},\boldsymbol{u}), 
\end{split}
\eeq
where $\boldsymbol{u} \in \mathcal{G}_{iQ}$ and $\mathcal{G}_k$ is defined by \eqref{hyperplane}. Various choices for the eight parameters $\boldsymbol{u}$ lead to the same contour integral. Here we choose 
\beq\begin{split}
&u_1 = \tfrac{iQ}2+\theta_\infty+\sigma_s, \quad u_2 = \tfrac{iQ}2+\theta_\infty-\sigma_s, \quad u_3=\theta_1, \quad u_4=-\theta_1, \\
&u_5=-\theta_0-\theta_\infty-\theta_t, \quad u_6=\theta_0-\theta_\infty-\theta_t, \quad u_7 = \tfrac{iQ}2+\theta_t-\sigma_t, \quad u_8=\tfrac{iQ}2+\theta_t+\sigma_t.
\end{split}\eeq
We are now going to apply the identity \eqref{identitybarnes} satisfied by the hyperbolic Barnes integral. First, from \eqref{actionomega} the action of $\omega$ on $\boldsymbol u \in \mathcal{G}_{iQ}$ is given by
\beq\begin{split}
& (\omega \boldsymbol u)_1=\tfrac{iQ}2+\sigma_s, \quad (\omega \boldsymbol u)_2 = \tfrac{iQ}2-\sigma_s, \quad (\omega \boldsymbol u)_3 = \theta_1-\theta_\infty, \quad (\omega \boldsymbol u)_4=-\theta_1-\theta_\infty, \\
& (\omega \boldsymbol u)_5 = -\theta_0-\theta_t, \quad (\omega \boldsymbol u)_6=\theta_0-\theta_t, \quad (\omega \boldsymbol u)_7 = \tfrac{iQ}2+\theta_\infty+\theta_t-\sigma_t, \quad (\omega \boldsymbol u)_8=\tfrac{iQ}2+\theta_\infty+\theta_t+\sigma_t.
\end{split}\eeq
Second, a straightforward application of the identity \eqref{identitybarnes} yields
\beq\label{4p9}
\mathcal{B}_h(b,b^{-1},\boldsymbol{u}) = \mathcal{B}_h(b,b^{-1},\omega\boldsymbol{u}) \prod_{\epsilon,\epsilon'=\pm} s_b(\epsilon \sigma_s+\epsilon' \theta_1-\theta_\infty) s_b(\epsilon \sigma_t+\epsilon'\theta_0+\theta_\infty).
\eeq
Recalling \eqref{4p6} and using $s_b(z)=g_b(z)/g_b(-z)$, substitution of \eqref{4p9} into \eqref{fusion01} leads to
\beq\label{F01}
F\lb b,\boldsymbol\theta,\sigma_s,\sigma_t \rb = X_1~\mathcal{B}_h(b,b^{-1},\omega \boldsymbol{u}),
\eeq
where
\beq\label{X1}
X_1 = \frac12 \prod_{\epsilon,\epsilon'=\pm} \frac{g_b \lb \epsilon \theta_1+\theta_{t}+\epsilon' \sigma_t\rb g_b \lb \epsilon \theta_0+\theta_\infty+\epsilon' \sigma_t \rb}{g_b \lb \epsilon \theta_0 + \theta_t + \epsilon' \sigma_s \rb g_b \lb \epsilon \theta_1+\theta_\infty+\epsilon' \sigma_s \rb} \prod_{\epsilon=\pm} \frac{g_b(\frac{iQ}2+2\epsilon \sigma_s)}{g_b( -\frac{iQ}2+2\epsilon \sigma_t)}.
\eeq
On the other hand, using the representation \eqref{fusion01} and taking $x \to x+\theta_t$, it can be verified that we also have
\beq\label{F10}
F\lb b,I\hat{\boldsymbol\theta},\sigma_t,\sigma_s \rb = X_2~\mathcal{B}_h(b,b^{-1},\omega \boldsymbol{u}),
\eeq
where $\hat{\boldsymbol\theta}$ is given in \eqref{dualtheta}, $I=\text{diag}(1,-1,1,1)$, and
\beq\label{X2}
X_2 = \frac12 \prod_{\epsilon,\epsilon'=\pm} \frac{g_b \lb \epsilon \theta_0-\theta_{t}+\epsilon' \sigma_s\rb g_b \lb \epsilon \theta_1-\theta_\infty+\epsilon' \sigma_s \rb}{g_b \lb \epsilon \theta_1 - \theta_t + \epsilon' \sigma_t \rb g_b \lb \epsilon \theta_0-\theta_\infty+\epsilon' \sigma_t \rb} \prod_{\epsilon=\pm} \frac{g_b(\frac{iQ}2+2\epsilon \sigma_t)}{g_b( -\frac{iQ}2+2\epsilon \sigma_s)}.
\eeq
Since the Virasoro fusion kernel is even in the external momenta, we have $F\lb b,I\hat{\boldsymbol\theta},\sigma_t,\sigma_s \rb=F\lb b,\hat{\boldsymbol\theta},\sigma_t,\sigma_s \rb$. We deduce from \eqref{F01} and \eqref{F10} that the following identity holds:
\beq
F\lb b,\boldsymbol\theta,\sigma_s,\sigma_t \rb = \frac{X_1}{X_2}~F\lb b,I\hat{\boldsymbol\theta},\sigma_t,\sigma_s \rb,
\eeq
and a direct computation shows that
\beq
 \frac{X_1}{X_2} = \frac{\alpha(\boldsymbol\theta,\sigma_s)\alpha(\boldsymbol\theta,-\sigma_s)}{\alpha(\hat{\boldsymbol\theta},\sigma_t)\alpha(\hat{\boldsymbol\theta},-\sigma_t)},
\eeq
where $\alpha$ is given in \eqref{alpha}. Hence the identity \eqref{identityF} is obtained.
\end{proof}

\subsection{Difference equations} 
We now show that the Virasoro fusion kernel is a joint eigenfunction of four difference operators, the first two acting on the internal momentum $\sigma_s$ and the remaining two acting on its dual $\sigma_t$.

\subsubsection{First pair of difference equations}
Define the difference operator $H_F\lb b,\boldsymbol\theta,\sigma_s\rb$ by
\begin{equation} \label{HF}
 H_F\lb b,\boldsymbol\theta,\sigma_s\rb := H_F^+\lb b,\boldsymbol\theta,\sigma_s\rb e^{ib\partial_{\sigma_s}}+H_F^+\lb b,\boldsymbol\theta,-\sigma_s\rb e^{-ib\partial_{\sigma_s}}+H_F^0\lb b,\boldsymbol\theta,\sigma_s\rb,
\end{equation}
where
\beq \label{K}
 H_F^+\lb b,\boldsymbol\theta,\sigma_s\rb=4\pi^2 \frac{\Gamma \lb 1+2b^2-2ib\sigma_s \rb \Gamma \lb b^2-2i b \sigma_s \rb \Gamma \lb -2i b \sigma_s \rb \Gamma \lb 1+b^2-2ib \sigma_s \rb}{\prod_{\epsilon,\epsilon'=\pm1}\Gamma \lb \tfrac{bQ}2-ib(\sigma_s+\epsilon \theta_1+\epsilon' \theta_\infty)\rb \Gamma \lb \tfrac{bQ}2-ib (\sigma_s+\epsilon \theta_0+\epsilon' \theta_t )\rb},\eeq
and
\beq\label{H0} \begin{split}
&H_F^0\lb b,\boldsymbol\theta,\sigma_s\rb=-2\operatorname{cosh}{( 2\pi b (\theta_1+\theta_t+\tfrac{ib}2))}\\
&+4 \displaystyle \sum_{k=\pm} \frac{\prod_{\epsilon=\pm} \operatorname{cosh}{(\pi b(\epsilon \theta_\infty-\tfrac{ib}2-\theta_1-k\sigma_s))} \operatorname{cosh}{(\pi b(\epsilon \theta_0-\tfrac{ib}2-\theta_t-k\sigma_s))}}{\operatorname{sinh}{\lb 2\pi b(k\sigma_s+\frac{ib}2)\rb}\operatorname{sinh}{\lb 2\pi b k \sigma_s \rb}}.
\end{split} \eeq
In the CFT literature, the operator $H_F$ represents the action of a \textit{Verlinde loop operator} denoted $L_{\gamma_t}$ on the s-channel conformal blocks \eqref{CB}. Such an operator is obtained by computing a delicate quantum monodromy operation on the s-channel blocks \eqref{CB} along a certain closed curve $\gamma_t$ on the four-point Riemann sphere. The computation leading to $H_F$ was explicitly realized in \cite[Eq. (5.37)]{AGGTV}. It was also performed in \cite[Eq. (5.31)]{DGOT} in a different normalization for the conformal blocks.

We now present a direct proof of the fact that the Virasoro fusion kernel \eqref{fusion01} is an eigenfunction of $H_F$.

\begin{proposition} \label{prop1}
The Virasoro fusion kernel satisfies the following pair of difference equations:
\begin{subequations} \label{difference1} \begin{align} 
\label{differenceF1} & H_F\lb b,\boldsymbol\theta,\sigma_s\rb F\lb b,\boldsymbol\theta,\sigma_s,\sigma_t\rb = 2\operatorname{cosh}{\lb 2\pi b \sigma_t\rb} ~ F\lb b,\boldsymbol\theta,\sigma_s,\sigma_t\rb, \\
\label{differenceF2} & H_F\lb b^{-1},\boldsymbol\theta,\sigma_s\rb F\lb b,\boldsymbol\theta,\sigma_s,\sigma_t\rb = 2\operatorname{cosh}{\lb 2\pi b^{-1} \sigma_t\rb} ~ F\lb b,\boldsymbol\theta,\sigma_s,\sigma_t\rb.
\end{align}\end{subequations}
\end{proposition}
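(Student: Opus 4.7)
Observe first that \eqref{differenceF2} is nothing but \eqref{differenceF1} with $b$ replaced by $b^{-1}$: the operator $H_F(b^{-1}, \boldsymbol\theta, \sigma_s)$ and the eigenvalue $2\cosh(2\pi b^{-1}\sigma_t)$ are obtained from the $b$-versions by the formal substitution $b \to b^{-1}$ in the definitions \eqref{K}--\eqref{H0}, while by \eqref{modularF} the kernel $F(b, \boldsymbol\theta, \sigma_s, \sigma_t)$ is itself invariant under $b \to b^{-1}$. Hence \eqref{differenceF1} implies \eqref{differenceF2}, and I would focus the rest of the argument on the former.

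The plan for \eqref{differenceF1} is to pull $H_F(b, \boldsymbol\theta, \sigma_s)$ under the contour integral in the Ponsot--Teschner formula \eqref{fusion01}. Write $F = P(\boldsymbol\theta, \sigma_s, \sigma_t)\int_{\mathsf{F}} I(x; \boldsymbol\theta, \sigma_s, \sigma_t)\, dx$ for the prefactor $P$ and integrand $I$. Under Assumption \ref{assumption}, the poles of $I$ are vertical semi-infinite lines and $\mathsf{F}$ lies in the strip $\im x \in ]-\tfrac{iQ}2, 0[$, whose width $Q/2$ exceeds $b$; therefore a shift $\sigma_s \to \sigma_s \pm ib$ can be absorbed by a mild deformation of $\mathsf{F}$ without crossing any poles. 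The ratios $P(\sigma_s\pm ib)/P(\sigma_s)$ are then computed explicitly from the functional equation of $g_b$, and a direct check shows that the coefficient $H_F^+(b, \boldsymbol\theta, \pm\sigma_s)$ is exactly the ratio of Gamma functions required to convert the $g_b$-shifts coming from $P$ into compact hyperbolic coefficients. Simultaneously, the shift of $\sigma_s$ in $I$ affects only the two $s_b$-factors depending on $\sigma_s$ in the denominator, which transform via the functional equation of $s_b$.

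With the three contributions brought under a common integral, the proof reduces to establishing a pointwise identity of the form
\[
\bigl(H_F(b, \boldsymbol\theta, \sigma_s) - 2\cosh(2\pi b \sigma_t)\bigr)\bigl[P(\sigma_s) I(x;\sigma_s,\sigma_t)\bigr]
= W(x + ib; \boldsymbol\theta, \sigma_s, \sigma_t) - W(x; \boldsymbol\theta, \sigma_s, \sigma_t),
\]
for a suitable auxiliary function $W$ built from the same hyperbolic factors as $I$. Once such a $W$ is identified, integration along $\mathsf{F}$ kills the right-hand side thanks to the known asymptotics of $s_b$ forcing the boundary contributions at $\re x \to \pm\infty$ to vanish. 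Guessing the correct $W$ and then reducing the equality to a trigonometric identity in eight hyperbolic sines and cosines of $x$, $\sigma_s$, $\sigma_t$ and the components of $\boldsymbol\theta$ is the main technical obstacle; this $BC_1$-type kernel identity closely parallels the one used by Ruijsenaars in his proof of \eqref{differenceR1} in \cite[Sec.~3]{R1999}, which gives a useful template.

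As a cross-check, one can use the identification of the integral in \eqref{fusion01} with the hyperbolic Barnes integral $\mathcal{B}_h(b,b^{-1},\boldsymbol u)$ established in the proof of Proposition \ref{lemma1}: the shift $\sigma_s \to \sigma_s \pm ib$ translates into shifts by $\pm ib$ of the Barnes parameters $u_1, u_2$, and \eqref{difference1} should then follow from a contiguous relation for $\mathcal{B}_h$ in these two parameters.
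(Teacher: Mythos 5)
Your reduction of \eqref{differenceF2} to \eqref{differenceF1} via the symmetry \eqref{modularF} is exactly what the paper does, and your overall architecture --- act with $H_F$ on the integral representation, convert the $\sigma_s$-shifts into data attached to the integrand, and reduce the eigenvalue equation to a pointwise identity that telescopes under a contour shift in $x$, following Ruijsenaars' proof of \eqref{differenceR1} as a template --- is the same architecture the paper uses. However, the proposal stops precisely where the proof begins: you explicitly defer the identification of the auxiliary function $W$ and the verification of the ``trigonometric identity in eight hyperbolic sines and cosines'' as ``the main technical obstacle.'' That identity \emph{is} the content of the proposition; without it the argument is a plan, not a proof. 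Concretely, the paper factors the integrand as $X_F(x,\sigma_s)Y_F(x,\sigma_t)Z_F(x)$ as in \eqref{fusionXYZ}, with all $\sigma_s$-dependence (prefactor and integrand) collected in $X_F$, and establishes the key identity \eqref{HFXF}, $H_F X_F/X_F=\psi(x,\sigma_s)+2\cosh(2\pi b(x+\theta_t))$ with $\psi$ given explicitly in \eqref{psi}. This is proved not by brute-force trigonometry but by a Liouville-theorem argument: both sides are even, $ib^{-1}$-periodic meromorphic functions of $\sigma_s$ with matching residues (at $\sigma_s=\pm(ib+\theta_\infty+x)$) and matching asymptotics as $\re\sigma_s\to\pm\infty$. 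The telescoping then rests on the further explicit identities \eqref{block2} and \eqref{4p28}; your $W$ is essentially $X_F Y_F Z_F\,\psi$ after the shift $x\to x-ib$. None of these ingredients is produced or verified in your proposal.

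A secondary issue: you assert that the shift $\sigma_s\to\sigma_s\pm ib$ ``can be absorbed by a mild deformation of $\mathsf{F}$ without crossing any poles'' because the strip has width $Q/2>b$. This is not obviously harmless: the poles of $s_b(x+\tfrac{iQ}{2}+\theta_\infty+\epsilon\sigma_s)^{-1}$ move vertically by $b$ under the shift, and for $0<b<1$ the displaced pole at height $-b$ lands inside the strip $\im x\in\,]-\tfrac{Q}{2},0[$, so a residue contribution must be ruled out, not assumed. The paper sidesteps this by treating \eqref{HFXF} as a pointwise meromorphic identity in $(x,\sigma_s)$ and performing only the single contour shift $x\to x-ib$, for which Assumption \ref{assumption} genuinely guarantees no pole crossing. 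Your cross-check via contiguous relations for $\mathcal{B}_h$ is plausible in spirit but equally unproved; no such contiguous relation is established in the paper or cited from \cite{BRS}.
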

Although the pair of difference equations \eqref{difference1} is known, we are not aware of a direct proof of it. More precisely, equations \eqref{difference1} are special cases of the pentagon equation which is one relation of the Moore-Seiberg groupoid \cite{pt1,MS}. They also arise from the fact that the fusion transformation \eqref{s-t} diagonalizes the Verlinde loop operator $L_{\gamma_t}$ \cite{TV13}. We now present a direct proof of \eqref{difference1}\footnote{The proof is mostly inspired from the proof of \cite[Theorem 3.1]{R1999}.}.



\begin{proof}
We only need to prove \eqref{differenceF1} because of the symmetry \eqref{modularF} of $F$. It will be convenient to rewrite the definition \eqref{fusion01} of $F$ as follows:
\beq\label{fusionXYZ}
F\lb b,\boldsymbol\theta,\sigma_s,\sigma_t\rb = \int_{\mathsf{F}} dx ~ X_F(x,\sigma_s) Y_F(x,\sigma_t) Z_F(x),
\eeq
where the dependence of $X_F(x,\sigma_t), Y_F(x,\sigma_s)$ and $Z_F(x)$ on $b$ and $\boldsymbol\theta$ is omitted for simplicity, and
\beq\begin{split}
& X_F(x,\sigma_s)=\prod_{\epsilon=\pm} \frac{g_b \lb \tfrac{iQ}2+2\epsilon \sigma_s\rb}{s_b \lb x+\frac{iQ}{2}+\theta_\infty+\epsilon \sigma_s \rb} \prod_{\epsilon,\epsilon'=\pm} g_b \lb \epsilon \theta_0 + \theta_t + \epsilon' \sigma_s \rb^{-1} g_b \lb \epsilon \theta_1-\theta_\infty+\epsilon' \sigma_s \rb^{-1}, \\
&Y_F(x,\sigma_t)=\frac{\prod_{\epsilon,\epsilon'=\pm}g_b \lb \epsilon \theta_1+\theta_{t}+\epsilon' \sigma_t\rb g_b \lb \epsilon \theta_0-\theta_\infty+\epsilon' \sigma_t \rb}{\prod_{\epsilon=\pm}g_b(-\frac{iQ}2+2\epsilon \sigma_t)s_b \lb x+\frac{i Q}{2}+\theta_t+\epsilon \sigma_t \rb}, \\
& Z_F(x)=\prod_{\epsilon=\pm} s_b \lb x+ \epsilon \theta_1 \rb s_b \lb x+\epsilon\theta_0+\theta_\infty+\theta_t \rb.\end{split}\eeq
The proof relies on the following non-trivial identity for the action of the difference operator \eqref{HF} on the block $X_F$:
\beq \label{HFXF}
\frac{H_F\lb b,\boldsymbol\theta,\sigma_s\rb X_F(x,\sigma_s)}{X_F(x,\sigma_s)}=\psi(x,\sigma_s) + 2\cosh{(2 \pi b(x+\theta_t))},
\eeq
where
\beq\label{psi}
\psi(x,\sigma_s)=-4\prod_{\epsilon=\pm} \frac{\cosh \left(\pi  b \left(x+\frac{i b}{2}+\epsilon\theta_1\right)\right) \cosh \left(\pi  b \left(x+\frac{i b}{2}+\theta_\infty+\theta_t+\epsilon\theta_0 \right)\right)}{\sinh(\pi b(x+ib+\theta_\infty+\epsilon\sigma_s))}.
\eeq
Let us prove the identity \eqref{HFXF}. Using the identities \eqref{propgb} and \eqref{differencesb} satisfied by the functions $g_b$ and $s_b$, the left-hand side of \eqref{HFXF} takes the explicit form 
\beq\label{HFXF1}
\frac{H_F\lb b,\boldsymbol\theta,\sigma_s\rb X_F(x,\sigma_s)}{X_F(x,\sigma_s)} = H_F^0\lb b,\boldsymbol\theta,\sigma_s\rb + \phi(x,\sigma_s) + \phi(x,-\sigma_s),
\eeq
where $H_F^0$ is given in \eqref{H0} and
\beq\begin{split}
&\phi(x,\sigma_s)\\
&=-\frac{4 \sinh (\pi  b (x+\theta_\infty-\sigma_s))\prod_{\epsilon=\pm} \cosh (\pi  b (\epsilon\theta_0+\tfrac{i b}{2}-\theta_t+\sigma_s)) \cosh(\pi  b(\epsilon\theta_1+\tfrac{i b}{2}+\theta_\infty+\sigma_s))}{\sinh (2\pi  b (\sigma_s+\tfrac{i b}2)) \sinh (2 \pi  b \sigma_s) \sinh (\pi  b (x+i b+\theta_\infty+\sigma_s))} .
\end{split} \eeq
Hence the proof of \eqref{HFXF} is equivalent to proving the identity $f_1(\sigma_s)=f_2(\sigma_s)$ where
\beq\label{4p18}\begin{split}
& f_1(\sigma_s) = \psi(x,\sigma_s) + 2 \cosh{(2 \pi b(x+\theta_t))}, \\
& f_2(\sigma_s) = H_F^0\lb b,\boldsymbol\theta,\sigma_s\rb + \phi(x,\sigma_s) + \phi(x,-\sigma_s).
\end{split}\eeq
We proceed as follows: both $f_1(\sigma_s)$ and $f_2(\sigma_s)$ are meromorphic functions of $\sigma_s$. It can also be easily be checked that they are even and $i b^{-1}$-periodic in $\sigma_s$. Moreover, we have the following asymptotics:
\beq\label{asymps}\begin{split}
&\lim_{Re(\sigma_s) \to \pm\infty} H_F^0\lb b,\boldsymbol\theta,\sigma_s\rb = 0 = \lim_{Re(\sigma_s) \to \pm\infty} \psi(x,\sigma_s), \\
     & \lim_{Re(\sigma_s) \to \pm \infty} \phi(x,\sigma_s) = e^{\mp2\pi b(x+\theta_t)},
     \end{split}\eeq
which entail the asymptotics
\beq\label{4p23}
\lim_{Re(\sigma_s) \to \pm\infty} f_1(\sigma_s) = 2\cosh{(2\pi b(x+\theta_t))} = \lim_{Re(\sigma_s) \to \pm\infty} f_2(\sigma_s).
\eeq
We now show that $f_1(\sigma_s)$ and $f_2(\sigma_s)$  have equal $\sigma_s$-residues in a horizontal period strip $\im \sigma_s \in [0,ib^{-1}]$. It can first be verified that $f_2(\sigma_s)$ has vanishing residues at $\sigma_s=\pm\tfrac{ib}2,0$. However it has non-zero residues at $\sigma_s=\pm(ib+\theta_\infty+x)$ which are equal to
\beq\label{residue}\begin{split}
&\text{Res}(f_2(\sigma_s))|_{\sigma_s=\pm(ib+\theta_\infty+x)}\\
&=\pm\frac{4}{\pi b} \sinh{(2 \pi  b (i b+\theta_\infty+x))}^{-1} \prod _{\epsilon =\pm} \cosh(\pi  b (\tfrac{i b}{2}+x+\epsilon\theta_1)) \cosh(\pi  b(\tfrac{i b}{2}+\theta_\infty+\theta_t+x+\epsilon\theta_0)).\end{split}\eeq
On the other hand, the only residues of $f_1(\sigma_s)$ are located at $\sigma_s=\pm(ib+\theta_\infty+x)$ and straightforward computations show that they are also equal to the right-hand side of \eqref{residue}. Hence we have shown that the function $f_1(\sigma_s)-f_2(\sigma_s)$ is bounded and holomorphic in a horizontal period strip $\im \sigma_s \in [0,ib^{-1}]$. These properties extend to the whole complex plane by $ib^{-1}$-periodicity in $\sigma_s$. Therefore Liouville theorem ensures that the function $f_1(\sigma_s)-f_2(\sigma_s)$ is constant everywhere. By evaluation at $\text{Re}(\sigma_s)=+\infty$ using \eqref{4p23}, we deduce that $f_1(\sigma_s)=f_2(\sigma_s)$. Hence the identity \eqref{HFXF} is proved.

We now let the difference operator $H_F$ defined in \eqref{HF} act on the Virasoro fusion kernel \eqref{fusionXYZ}. Using \eqref{HFXF}, we have
\beq \label{step1} \begin{split}
& H_F\lb b,\boldsymbol\theta,\sigma_s\rb F\lb b,\boldsymbol\theta,\sigma_s,\sigma_t\rb = \int_{\mathsf{F}} dx~X_F(x,\sigma_s) Y_F(x,\sigma_t) Z_F(x) \psi(x,\sigma_s) \\
     &+2 \int_{\mathsf{F}} dx~X_F(x,\sigma_s) Y_F(x,\sigma_t) Z_F(x)\cosh{(2\pi b(x+\theta_t))}.
\end{split} \eeq
The next step of the proof utilizes the fact that the blocks $X_F$, $Y_F$ and $Z_F$ satisfy the following identities:
\begin{subequations} \label{eqXYZ} \begin{align}
\label{block1} & \frac{X_F(x-ib,\sigma_s)}{X_F(x,\sigma_s)} = 2 \cosh (2 \pi  b \sigma_s)-2 \cosh (2\pi  b (x+\theta_\infty)), \\
\label{block2} & \frac{Y_F(x-ib,\sigma_t)}{Y_F(x,\sigma_t)} = 2 \cosh (2 \pi  b \sigma_t)-2 \cosh{(2\pi b(x+\theta_t))}, \\
\label{block3} & \frac{Z_F(x-ib)}{Z_F(x)} = \frac{1}{16 \prod_{\epsilon=\pm} \cosh(\pi b(x-\tfrac{i b}{2}+\epsilon\theta_1)) \cosh(\pi  b(x-\tfrac{i b}{2}+\theta_\infty+\theta_t+\epsilon\theta_0))}.
\end{align} \end{subequations}
We now perform a contour shift together with $x\to x-ib$ in the first line of \eqref{step1} and we use \eqref{block2}. Assumption \ref{assumption} ensures that the contour does not cross any pole. We obtain
\beq\label{4p27}\begin{split}
H_F\lb b,\boldsymbol\theta,\sigma_s\rb & F\lb b,\boldsymbol\theta,\sigma_s,\sigma_t\rb = 2\cosh{(2\pi b \sigma_t)} \int_{\mathsf{F}} dx~X_F(x-ib,\sigma_s) Y_F(x,\sigma_t) Z_F(x-ib) \psi(x-ib,\sigma_s) \\
     & - 2 \int_{\mathsf{F}} dx~X_F(x-ib,\sigma_s) Y_F(x,\sigma_t) Z_F(x-ib) \psi(x-ib,\sigma_s)\cosh{(2 \pi b(x+\theta_t))} \\
     & + 2 \int_{\mathsf{F}} dx~X_F(x,\sigma_s) Y_F(x,\sigma_t) Z_F(x) \cosh{(2 \pi b(x+\theta_t))}.
\end{split}\eeq
Finally, the identity
\beq \label{4p28}
\psi(x-ib,\sigma_s) X_F(x-ib,\sigma_s)Z_F(x-ib)=X_F(x,\sigma_s)Z_F(x)
\eeq
implies that the last two lines in \eqref{4p27} cancel, and that the first line in $\eqref{4p27}$ yields the desired result.
\end{proof}
\subsubsection{Second pair of difference equations} 
The duality transformation \eqref{identityF} exchanging the internal momenta $\sigma_s$ and $\sigma_t$ implies that the Virasoro fusion kernel \eqref{fusion01} satisfies another pair of difference equations where the difference operators act on $\sigma_t$. We introduce the dual operator $\tilde{H}_F$ by
 \beq \label{tildeHF} \tilde H_F\lb b,\hat{\boldsymbol\theta},\sigma_t\rb := \tilde{H}_F^+\lb b,\hat{\boldsymbol\theta},\sigma_t\rb e^{ib\partial_{\sigma_t}}+\tilde{H}_F^+\lb b,\hat{\boldsymbol\theta},-\sigma_t\rb e^{-ib\partial_{\sigma_t}}+H_F^0\lb b,\hat{\boldsymbol\theta},\sigma_t\rb, \eeq
where $H_F^0$ is given in \eqref{H0}, $\hat{\boldsymbol\theta}$ is defined by \eqref{dualtheta} and
     \beq \label{tildeK}\begin{split}
  \tilde{H}_F^+\lb b,\hat{\boldsymbol\theta},\sigma_t\rb = \frac{4\pi^2~\Gamma \left(1-b^2+2 i b\sigma_t \right) \Gamma (1+2ib\sigma_t) \Gamma \left(2 i b \sigma_t-2 b^2\right) \Gamma \left(2 i b \sigma_t-b^2\right)}{\prod _{\epsilon,\epsilon'=\pm} \Gamma \left(\frac{1-b^2}2+ib \left(\sigma_t+\epsilon \theta_0+\epsilon' \theta_\infty\right)\right) \Gamma \left(\frac{1-b^2}2+ib \left(\sigma_t+\epsilon \theta_1+\epsilon' \theta_t\right)\right)}.
     \end{split}\eeq
We next show that the Virasoro fusion kernel is also an eigenfunction of this difference operator.
\begin{proposition}\label{prop4p2}
The Virasoro fusion kernel satisfies the dual pair of difference equations
\begin{subequations}\label{difference2} \begin{align}
\label{differenceF3} & \tilde{H}_F\lb b,\hat{\boldsymbol\theta},\sigma_t\rb F\lb b,\boldsymbol\theta,\sigma_s,\sigma_t\rb = 2\operatorname{cosh}{\lb 2\pi b \sigma_s\rb} ~ F\lb b,\boldsymbol\theta,\sigma_s,\sigma_t\rb, \\
\label{differenceF4} & \tilde{H}_F\lb b^{-1},\hat{\boldsymbol\theta},\sigma_t\rb F\lb b,\boldsymbol\theta,\sigma_s,\sigma_t\rb = 2\operatorname{cosh}{\lb 2\pi b^{-1} \sigma_s\rb} ~ F\lb b,\boldsymbol\theta,\sigma_s,\sigma_t\rb.
\end{align}\end{subequations}
\end{proposition}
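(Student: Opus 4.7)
The strategy is to deduce the dual pair of difference equations from the pair already proved in Proposition \ref{prop1} by combining it with the duality transformation \eqref{identityF} of Proposition \ref{lemma1}. The key observation is that since $\tilde{H}_F$ acts on $\sigma_t$ only, we can interchange the roles of internal momenta, apply the first pair of difference equations, and then conjugate back.

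First, I would apply Proposition \ref{lemma1} to rewrite
\begin{equation*}
F\lb b,\boldsymbol\theta,\sigma_s,\sigma_t\rb = \frac{N(\sigma_s)}{G(\sigma_t)}\, F\lb b,\hat{\boldsymbol\theta},\sigma_t,\sigma_s\rb,
\end{equation*}
where $N(\sigma_s)=\alpha(\boldsymbol\theta,\sigma_s)\alpha(\boldsymbol\theta,-\sigma_s)$ and $G(\sigma_t)=\alpha(\hat{\boldsymbol\theta},\sigma_t)\alpha(\hat{\boldsymbol\theta},-\sigma_t)$. Next, I apply Proposition \ref{prop1} with $\boldsymbol\theta$ replaced by $\hat{\boldsymbol\theta}$ and the two internal momenta swapped, which yields
\begin{equation*}
H_F\lb b,\hat{\boldsymbol\theta},\sigma_t\rb F\lb b,\hat{\boldsymbol\theta},\sigma_t,\sigma_s\rb = 2\cosh\lb 2\pi b\sigma_s\rb F\lb b,\hat{\boldsymbol\theta},\sigma_t,\sigma_s\rb.
\end{equation*}
Substituting the expression from the first step and noting that $N(\sigma_s)$ is a scalar with respect to the action of $H_F(b,\hat{\boldsymbol\theta},\sigma_t)$ on $\sigma_t$, it cancels from both sides. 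What remains is the identity
\begin{equation*}
\Bigl[G(\sigma_t)^{-1}\, H_F\lb b,\hat{\boldsymbol\theta},\sigma_t\rb\, G(\sigma_t)\Bigr] F\lb b,\boldsymbol\theta,\sigma_s,\sigma_t\rb = 2\cosh\lb 2\pi b\sigma_s\rb F\lb b,\boldsymbol\theta,\sigma_s,\sigma_t\rb,
\end{equation*}
so the claim \eqref{differenceF3} reduces to verifying the operator identity
\begin{equation*}
G(\sigma_t)^{-1}\, H_F\lb b,\hat{\boldsymbol\theta},\sigma_t\rb\, G(\sigma_t) = \tilde H_F\lb b,\hat{\boldsymbol\theta},\sigma_t\rb.
\end{equation*}

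The constant term $H_F^0$ is unaffected by the similarity transform, so it matches the constant term in \eqref{tildeHF}. The real content is checking that the coefficients of $e^{\pm ib\partial_{\sigma_t}}$ agree, namely that
\begin{equation*}
H_F^+\lb b,\hat{\boldsymbol\theta},\sigma_t\rb\,\frac{G(\sigma_t+ib)}{G(\sigma_t)} = \tilde H_F^+\lb b,\hat{\boldsymbol\theta},\sigma_t\rb,
\end{equation*}
and the analogous identity with $\sigma_t\to-\sigma_t$. This is a direct computation: using the explicit form of $\alpha$ in \eqref{alpha}, the ratio $G(\sigma_t+ib)/G(\sigma_t)$ expands into shifted values of $g_b$, and the functional equation \eqref{propgb} converts each such shift into ratios of $\Gamma$-functions and hyperbolic factors. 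After comparison with $H_F^+$ as given in \eqref{K}, the hyperbolic factors should combine to produce exactly the $\Gamma$-functions in the numerator and denominator of \eqref{tildeK}.

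The main obstacle, and the only nontrivial part, is this bookkeeping of shifted $g_b$-values in Step 5: one must carefully track the eight $g_b$-factors in the denominator of $\alpha(\hat{\boldsymbol\theta},\pm\sigma_t)$ under $\sigma_t\mapsto\sigma_t+ib$, noting that $\hat{\boldsymbol\theta}=K\boldsymbol\theta$ permutes $\theta_0\leftrightarrow\theta_1$, and check that the resulting product reproduces the four $\Gamma$-factors in the denominator of \eqref{tildeK} together with the $\Gamma$-factors arising from $g_b(\pm 2\sigma_t\pm iQ/2)$. Once \eqref{differenceF3} is established, the modular companion \eqref{differenceF4} is immediate from the modular symmetry \eqref{modularF} of $F$ (which forces $\tilde H_F(b^{-1},\hat{\boldsymbol\theta},\sigma_t)$ to annihilate $F$ against $2\cosh(2\pi b^{-1}\sigma_s)$ by the same argument with $b\leftrightarrow b^{-1}$).
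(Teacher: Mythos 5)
Your proposal is correct and follows essentially the same route as the paper: transport the first pair of difference equations through the duality \eqref{identityF} of Proposition \ref{lemma1}, observe that the $\sigma_s$-dependent prefactor commutes with the operator acting on $\sigma_t$, and reduce the claim to the conjugation identity $\alpha(\hat{\boldsymbol\theta},\pm\sigma_t)^{-1}\circ H_F\lb b,\hat{\boldsymbol\theta},\sigma_t\rb\circ\alpha(\hat{\boldsymbol\theta},\pm\sigma_t)=\tilde H_F\lb b,\hat{\boldsymbol\theta},\sigma_t\rb$, verified coefficient by coefficient via \eqref{propgb}, with \eqref{differenceF4} following from the modular symmetry \eqref{modularF}. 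This matches the paper's proof step for step, including the identification of the nontrivial bookkeeping of shifted $g_b$-factors as the only computational content.
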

\begin{proof}
Again, we only need to prove \eqref{differenceF3} thanks to the symmetry \eqref{modularF} of $F$. Let us first apply the duality transformation $\boldsymbol\theta \to \hat{\boldsymbol\theta}$ and $\sigma_s \leftrightarrow \sigma_t$ to the equation \eqref{differenceF1}:
\beq
\label{difference1prime} H_F\lb b,\hat{\boldsymbol\theta},\sigma_t\rb F\lb b,\hat{\boldsymbol\theta},\sigma_t,\sigma_s\rb = 2\operatorname{cosh}{\lb 2\pi b \sigma_s\rb} ~ F\lb b,\hat{\boldsymbol\theta},\sigma_t,\sigma_s\rb,\eeq
where $H_F$ is given in \eqref{HF}. Substituting the identity \eqref{identityF} into \eqref{difference1prime}, we obtain 
\beq\label{4p33}
\lb\alpha(\hat{\boldsymbol\theta},\pm \sigma_t)^{-1}\circ H_F\lb b,\hat{\boldsymbol\theta},\sigma_t\rb \circ\alpha(\hat{\boldsymbol\theta},\pm \sigma_t)\rb F\lb b,\boldsymbol\theta,\sigma_s,\sigma_t\rb = 2\operatorname{cosh}{\lb 2\pi b \sigma_s\rb} ~ F\lb b,\boldsymbol\theta,\sigma_s,\sigma_t\rb,\eeq
where for the sake of brevity we denoted $f(a,\pm b)=f(a,b)f(a,-b)$. The operator on the left-hand side of \eqref{4p33} takes the form
\beq\begin{split}
\alpha(\hat{\boldsymbol\theta},\pm \sigma_t)^{-1} & \circ H_F\lb b,\hat{\boldsymbol\theta},\sigma_t\rb\circ \alpha(\hat{\boldsymbol\theta},\pm \sigma_t) = H_F^0\lb b,\hat{\boldsymbol\theta},\sigma_t\rb\\
&+\frac{\alpha(\hat{\boldsymbol\theta},\sigma_t+ib)}{\alpha(\hat{\boldsymbol\theta},\sigma_t)} \frac{\alpha(\hat{\boldsymbol\theta},-\sigma_t-ib)}{\alpha(\hat{\boldsymbol\theta},-\sigma_t)} H_F^+\lb b,\hat{\boldsymbol\theta},\sigma_t\rb e^{ib\partial_{\sigma_t}} \\
&+ \frac{\alpha(\hat{\boldsymbol\theta},\sigma_t-ib)}{\alpha(\hat{\boldsymbol\theta},\sigma_t)} \frac{\alpha(\hat{\boldsymbol\theta},-\sigma_t+ib)}{\alpha(\hat{\boldsymbol\theta},-\sigma_t)}  H_F^+\lb b,\hat{\boldsymbol\theta},-\sigma_t\rb e^{-ib\partial_{\sigma_t}}.
\end{split}\eeq
A tedious but straightforward computation using the properties \eqref{propgb} and \eqref{differencesb} of the functions $g_b$ and $s_b$ shows that for $k=\pm1$ we have
\beq
\frac{\alpha(\hat{\boldsymbol\theta},\sigma_t +k ib)}{\alpha(\hat{\boldsymbol\theta},\sigma_t)} \frac{\alpha(\hat{\boldsymbol\theta},-\sigma_t-k ib)}{\alpha(\hat{\boldsymbol\theta},-\sigma_t)} H_F^+\lb b,\hat{\boldsymbol\theta},k\sigma_t\rb =  \tilde{H}_F^+\lb b,\hat{\boldsymbol\theta},k\sigma_t\rb,
\eeq
which finally implies the identity
\beq
\alpha(\hat{\boldsymbol\theta},\pm \sigma_t)^{-1}\circ H_F\lb b,\hat{\boldsymbol\theta},\sigma_t\rb\circ \alpha(\hat{\boldsymbol\theta},\pm \sigma_t) = \tilde H_F\lb b,\hat{\boldsymbol\theta},\sigma_t\rb.
\eeq
In view of \eqref{4p33}, this completes the proof.
 \end{proof}
 
In summary, we have shown that the Virasoro fusion kernel satisfies two pairs of difference equations given in \eqref{difference1} and \eqref{difference2}. 

 
  
\section{Main result} \label{section5}

The renormalized Ruijsenaars hypergeometric function \eqref{fonctionR} and the Virasoro fusion kernel \eqref{fusion01} were studied in Sections \ref{section2} and \ref{section4}, respectively. Although they appear in different contexts, the two functions resemble each other. Both of them are proportional to a hyperbolic Barnes integral \eqref{hyperbolicbarnes} and are joint eigenfunctions of four difference operators. The main result of this section is Theorem \ref{theoremRF}. It shows that $F$ and $R_\text{ren}$ are the same function up to normalization.

There are at least three obstacles in the identification of $F$ and $R_\text{ren}$. First, the two functions are expressed in terms of different parameters. Second, for instance the definition of $F$ depend on the normalization of the s- and t-channel conformal blocks. Finally, both $F$ and $R_\text{ren}$ possess various equivalent representations because the hyperbolic Barnes integral satisfies the nontrivial identities \eqref{identitybarnes}. Hence it appears hopeless to find an exact relation between $F$ and $R_\text{ren}$ only by comparing their integral representation.


Instead, the key to the identification of $F$ and $R_\text{ren}$ will be to compare and match the four difference equations that they satisfy. Let us first compare the difference operators $H_F$ defined by \eqref{HF} and $H_{RvD}$ given in \eqref{hamiltonienR}. $H_F$ and $H_{RvD}$ appear in the left-hand sides of the difference equations \eqref{differenceF1} and \eqref{differenceR1} satisfied by $F$ and $R_\text{ren}$, respectively. Of particular importance are the "potentials" $H_F^0$ defined by \eqref{H0} and $V$ given in \eqref{potentialV}, because they remain invariant under any change of normalization. The first step in the identification of $F$ and $R_\text{ren}$ is to show that $H_F^0$ and $V$ are equal under a certain parameters correspondence.


\subsection{Ruijsenaars/CFT parameters correspondence}\label{section5p1}
We now provide a parameters correspondence between the Virasoro fusion kernel $F\lb b,\boldsymbol\theta,\sigma_s,\sigma_t\rb$ and the Ruijsenaars hypergeometric function $R_\text{ren}(a_-,a_+,\boldsymbol\gamma,v,\hat{v})$. First of all, we have
\beq\label{5p2}
b=a_-, \qquad b^{-1}=a_+.
\eeq
The second relation between the pairs $(v,\hat v)$, and $(\sigma_s,\sigma_t)$ is simply given by
\beq
v=\sigma_s, \qquad \hat v=\sigma_t.
\eeq
Moreover, the couplings $\boldsymbol{\gamma}$ defined by \eqref{gamma} and the external momenta $\boldsymbol\theta$ given in \eqref{theta} are related by
\beq\label{parametersR}
\boldsymbol\gamma=L \boldsymbol\theta,\eeq
where the matrix $L$ reads
\beq\label{L}
L = \begin{pmatrix}-i & -i & 0 & 0 \\ 0 & 0 & -i & i \\ i & -i & 0 & 0 \\ 0 & 0 & -i & -i \end{pmatrix}.
\eeq
Finally, the dual sets of parameters $\hat{\boldsymbol{\gamma}}$ and $\hat{\boldsymbol{\theta}}$ respectively given in \eqref{gamma} and \eqref{dualtheta} can be related as follows. It is straightforward to verify that the matrices $J$ defined by \eqref{J}, K given in \eqref{dualtheta} and $L$ satisfy
\beq
JL=LK,
\eeq
which implies
\beq\label{dualparametersR}
\hat{\boldsymbol\gamma}=L \hat{\boldsymbol\theta}.\eeq
We emphasize that the parameters correspondence described above can be rescaled thanks to the property \eqref{scaleR} of the function $R_\text{ren}$.

It can now be verified that the potentials $H_F^0$ defined by \eqref{H0} and $V$ given in \eqref{potentialV} are equal:
\beq\label{VH0}
V(b,b^{-1},L\boldsymbol\theta,\sigma_s) = H_F^0\lb b,\boldsymbol\theta,\sigma_s\rb.
\eeq
This identification suggests that the difference operator $H_{RvD}$ defined in \eqref{hamiltonienR} can be obtained from $H_F$ given in \eqref{HF} after a proper change of normalization. It also suggests that the difference equation \eqref{differenceF1} satisfied by $F$ can be mapped to the difference equation \eqref{differenceR1} satisfied by $R_\text{ren}$.

\subsection{Renormalized fusion kernel $F_{\text{ren}}$}

\subsubsection{Some requirements}
The second step in the identification of $F$ and $R_\text{ren}$ is to find a renormalized version $F_{\text{ren}}$ of $F$ which satisfies a difference equation of the form \eqref{differenceR1}, but also three other equations of the form \eqref{differenceR2}, \eqref{differenceR3}, \eqref{differenceR4}. This requirement can be reformulated as follows. The difference equation \eqref{differenceR1} satisfied by the function $R_\text{ren}$ implies the three remaining equations \eqref{differenceR2}, \eqref{differenceR3} and \eqref{differenceR4} because $R_\text{ren}$ satisfies the identities \eqref{modulardual} and \eqref{selfdual}. Moreover, using the parameters correspondence of Section \ref{section5p1}, the self-duality \eqref{selfdual} of $R_\text{ren}$ becomes
\beq
R_{\text{ren}}(b,b^{-1},L\boldsymbol{\theta},\sigma_s,\sigma_t)=R_{\text{ren}}(b,b^{-1},L\hat{\boldsymbol{\theta}},\sigma_t,\sigma_s).
\eeq
This equation can be compared to the property \eqref{identityF} of the Virasoro fusion kernel: the function $R_\text{ren}$ is self-dual, while $F$ is not. We deduce that the key to the identification of $F$ and $R_\text{ren}$ is to find a renormalized version $F_\text{ren}$ of $F$ which (i) satisfies a difference equation of the type \eqref{differenceR1}, (ii) is invariant under $b\to b^{-1}$, (iii) is self-dual. 
\subsubsection{Definition of $F_{\text{ren}}$}
We now present a renormalized version $F_{\text{ren}}$ of $F$ which satisfies the three properties (i)-(iii). First, we define renormalized conformal blocks $\mathcal{F}_\text{ren}$ by
\beq
\mathcal F_\text{ren}\lb b,\boldsymbol\theta,\sigma_s;z\rb =  N(\theta_\infty,\theta_1,\sigma_s) N(\sigma_s,\theta_t,\theta_0) \mathcal F\lb b,\boldsymbol\theta,\sigma_s;z\rb,
\eeq 
where $\mathcal F$ is given in \eqref{CB}. Moreover, we choose the following normalization factor:
\beq \label{normalizationR}
N(\theta_3,\theta_2,\theta_1)=\frac{g_b(-\theta_1-\theta_2-\theta_3) g_b(\theta_1-\theta_2-\theta_3) g_b(-\theta_1-\theta_2+\theta_3) g_b(\theta_1-\theta_2+\theta_3)}{g_b(-2 \theta_1+\frac{i Q}{2}) g_b(-2 \theta_2+\frac{i Q}{2}) g_b(2 \theta_3+\frac{i Q}{2})}.
\eeq
We are now going to rewrite the fusion transformation \eqref{s-t} in terms of $\mathcal{F}_\text{ren}$. We first introduce a weight function\footnote{The weight function $\omega$ was introduced by Ruijsenaars in \cite[Eq. (1.15)]{R2003} in order to construct a unitary eigenfunction transform associated to the function $R_\text{ren}$.}
\beq\label{defomega}
\omega\lb b,\boldsymbol\theta,\sigma_s\rb=\mu\lb b,\boldsymbol\theta,\sigma_s\rb \mu\lb b,\boldsymbol\theta,-\sigma_s\rb,
\eeq
where
\beq \label{defmu} \mu\lb b,\boldsymbol\theta,\sigma_s\rb=\frac{s_b(\tfrac{iQ}2+2\sigma_s)}{\prod_{\epsilon=\pm1}s_b \lb \sigma_s-\theta_1+\epsilon \theta_\infty \rb s_b \lb \sigma_s-\theta_t+\epsilon\theta_0\rb}.
\eeq 
The fusion transformation \eqref{s-t} can now be rewritten as
\beq \label{s-t-tilde}
 \qquad \mathcal F_\text{ren} \lb b,\boldsymbol\theta,\sigma_s;z\rb=\int_{\mathbb R_+} d\sigma_t ~ \omega\lb b,\hat{\boldsymbol\theta},\sigma_t\rb ~ F_{\text{ren}}\lb b,\boldsymbol\theta,\sigma_s,\sigma_t\rb \mathcal F_\text{ren}\lb b,\hat{\boldsymbol\theta},\sigma_t;1-z\rb,
 \eeq
 where the renormalized Virasoro fusion kernel $F_{\text{ren}}$ is defined by
\beq \label{tildeF}
F_{\text{ren}}\lb b,\boldsymbol\theta,\sigma_s,\sigma_t\rb= \omega\lb b,\hat{\boldsymbol\theta},\sigma_t\rb^{-1} \frac{N(\theta_\infty,\theta_1,\sigma_s) N(\sigma_s,\theta_t,\theta_0)}{N(\theta_\infty,\theta_0,\sigma_t) N(\sigma_t,\theta_t,\theta_1)} ~ F\lb b,\boldsymbol\theta,\sigma_s,\sigma_t\rb,
 \eeq
 and where $F$ is defined in \eqref{fusion01}. In the next three propositions, we show that $F_{\text{ren}}$ satisfies the desired three properties (i)-(iii).
\begin{proposition}
The following identity holds:
 \beq\label{modulartildeF}
F_{\text{ren}}\lb b^{-1},\boldsymbol\theta,\sigma_s,\sigma_t\rb = F_{\text{ren}}\lb b,\boldsymbol\theta,\sigma_s,\sigma_t\rb.
\eeq
 \end{proposition}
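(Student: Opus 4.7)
The plan is a direct verification that every ingredient in the renormalization formula \eqref{tildeF} is individually invariant under $b\to b^{-1}$. Since $F_{\text{ren}}$ is built from the unnormalized fusion kernel $F$ multiplied by a prefactor composed of the weight $\omega(b,\hat{\boldsymbol\theta},\sigma_t)^{-1}$ and a ratio of four normalization factors $N$, it suffices to check these three pieces separately.

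First, I would invoke the symmetry \eqref{modularF}, which already establishes $F(b^{-1},\boldsymbol\theta,\sigma_s,\sigma_t) = F(b,\boldsymbol\theta,\sigma_s,\sigma_t)$. Next, I would examine the normalization factor $N(\theta_3,\theta_2,\theta_1)$ given by \eqref{normalizationR}; it is entirely a product and quotient of values of $g_b$ whose arguments involve only $Q = b+b^{-1}$ and the external parameters. Since $g_b = g_{b^{-1}}$ (stated just before \eqref{modularF}) and $Q$ is manifestly invariant under $b\to b^{-1}$, each of the four factors $N(\theta_\infty,\theta_1,\sigma_s)$, $N(\sigma_s,\theta_t,\theta_0)$, $N(\theta_\infty,\theta_0,\sigma_t)$, $N(\sigma_t,\theta_t,\theta_1)$ is unchanged, and hence so is their ratio. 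Finally, from \eqref{defomega}--\eqref{defmu} the weight $\omega(b,\hat{\boldsymbol\theta},\sigma_t)$ is a product of values of $s_b$ with arguments involving only $Q$ and the external parameters; the identity $s_b = s_{b^{-1}}$ then implies that $\omega$ is also invariant. Combining the three observations yields the claim.

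There is essentially no obstacle in this proof: the result is a structural consequence of the fact that $F_{\text{ren}}$ has been assembled from $b\leftrightarrow b^{-1}$-symmetric building blocks. The main content of the proposition is not the verification itself but the choice of renormalization \eqref{tildeF}, which was engineered precisely so that this (and the subsequent self-duality) holds.
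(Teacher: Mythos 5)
Your proposal is correct and is essentially the paper's own argument, which simply notes that \eqref{modulartildeF} follows from the symmetries $s_b(x)=s_{b^{-1}}(x)$ and $g_b(x)=g_{b^{-1}}(x)$; you have merely spelled out the routine check that each factor in \eqref{tildeF} (the kernel $F$ via \eqref{modularF}, the $N$-factors, and the weight $\omega$) is individually invariant. No further comment is needed.
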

\begin{proof}
Equation \eqref{modulartildeF} follows from the symmetries $s_b(x)=s_{b^{-1}}(x)$ and $g_b(x)=g_{b^{-1}}(x)$.
\end{proof}
We now prove that $F_{\text{ren}}$ is self-dual.
\begin{proposition}
 The renormalized Virasoro fusion kernel $F_{\text{ren}}$ satisfies 
 \beq\label{identitytildeF}
F_{\text{ren}}\lb b,\boldsymbol\theta,\sigma_s,\sigma_t\rb = F_{\text{ren}}\lb b,\hat{\boldsymbol\theta},\sigma_t,\sigma_s\rb.
\eeq
\end{proposition}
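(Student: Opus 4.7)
The strategy is to invoke the duality identity \eqref{identityF} of Proposition \ref{lemma1}, which already relates $F(b,\boldsymbol\theta,\sigma_s,\sigma_t)$ and $F(b,\hat{\boldsymbol\theta},\sigma_t,\sigma_s)$ up to an explicit $\alpha$-factor. The content of \eqref{identitytildeF} is that the scalar prefactor in the definition \eqref{tildeF} of $F_{\text{ren}}$ has been chosen precisely so as to absorb this $\alpha$-factor.

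Concretely, I would first substitute the definition \eqref{tildeF} into $F_{\text{ren}}(b,\hat{\boldsymbol\theta},\sigma_t,\sigma_s)$, using $\hat{\hat{\boldsymbol\theta}} = \boldsymbol\theta$, and then apply \eqref{identityF} to rewrite $F(b,\hat{\boldsymbol\theta},\sigma_t,\sigma_s)$ in terms of $F(b,\boldsymbol\theta,\sigma_s,\sigma_t)$. Comparing with the definition of $F_{\text{ren}}(b,\boldsymbol\theta,\sigma_s,\sigma_t)$, the desired identity \eqref{identitytildeF} reduces to the scalar equality
\begin{equation*}
\frac{\omega(b,\boldsymbol\theta,\sigma_s)}{\omega(b,\hat{\boldsymbol\theta},\sigma_t)} \cdot \frac{N(\theta_\infty,\theta_1,\sigma_s)^2 N(\sigma_s,\theta_t,\theta_0)^2}{N(\theta_\infty,\theta_0,\sigma_t)^2 N(\sigma_t,\theta_t,\theta_1)^2} \cdot \frac{\alpha(\boldsymbol\theta,\sigma_s)\alpha(\boldsymbol\theta,-\sigma_s)}{\alpha(\hat{\boldsymbol\theta},\sigma_t)\alpha(\hat{\boldsymbol\theta},-\sigma_t)} = 1.
\end{equation*}
The left-hand side factorizes as a function of $(\boldsymbol\theta,\sigma_s)$ divided by the same functional combination evaluated at $(\hat{\boldsymbol\theta},\sigma_t)$, so it suffices to establish the single-variable identity
\begin{equation*}
\omega(b,\boldsymbol\theta,\sigma_s)\, N(\theta_\infty,\theta_1,\sigma_s)^2\, N(\sigma_s,\theta_t,\theta_0)^2\, \alpha(\boldsymbol\theta,\sigma_s)\, \alpha(\boldsymbol\theta,-\sigma_s) = C(\boldsymbol\theta),
\end{equation*}
where $C(\boldsymbol\theta)$ does not depend on $\sigma_s$ and is symmetric under $\theta_0 \leftrightarrow \theta_1$.

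This last identity is verified by direct $g_b$-algebra. Using $s_b(z) = g_b(z)/g_b(-z)$ and the reflection $s_b(z)s_b(-z)=1$, I expand every $s_b$-factor in $\mu$ and $\omega$ into $g_b$-factors and collect. The eightfold product $\prod_{\epsilon,\epsilon',\epsilon''=\pm} g_b(\epsilon''\sigma_s+\epsilon\theta_1+\epsilon'\theta_\infty)$ in the denominator of $\alpha(\boldsymbol\theta,\sigma_s)\alpha(\boldsymbol\theta,-\sigma_s)$ cancels precisely against the numerator of $N(\theta_\infty,\theta_1,\sigma_s)^2$ together with the $g_b$-factors extracted from the $s_b$'s of $\omega$, and analogously for the $\theta_0,\theta_t$ counterparts. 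The $\sigma_s$-dependent $\tfrac{iQ}{2}$-shifted factors $g_b(\pm 2\sigma_s\pm\tfrac{iQ}2)$ in the numerator of $\alpha\alpha$ combine with $g_b(\pm 2\sigma_s+\tfrac{iQ}2)^2$ in the denominators of $N^2$ and with $s_b(\tfrac{iQ}2\pm 2\sigma_s)$ from $\omega$ to yield $1$. What survives is the constant $C(\boldsymbol\theta) = \bigl[g_b(-2\theta_0+\tfrac{iQ}2)\, g_b(-2\theta_1+\tfrac{iQ}2)\, g_b(-2\theta_t+\tfrac{iQ}2)\, g_b(2\theta_\infty+\tfrac{iQ}2)\bigr]^{-2}$, manifestly invariant under $\theta_0 \leftrightarrow \theta_1$. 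The main obstacle is the bookkeeping of the roughly forty $g_b$-factors that must pair up, but no analytic subtlety arises since the normalization \eqref{normalizationR} was engineered for this very cancellation.
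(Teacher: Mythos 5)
Your proposal is correct and follows essentially the same route as the paper: substitute the definition \eqref{tildeF} into $F_{\text{ren}}\lb b,\hat{\boldsymbol\theta},\sigma_t,\sigma_s\rb$, apply the duality identity \eqref{identityF}, and reduce the claim to the cancellation of the resulting scalar prefactor, which the paper dispatches as ``a direct computation''. Your way of organizing that computation --- factoring the prefactor as $\Phi(\boldsymbol\theta,\sigma_s)/\Phi(\hat{\boldsymbol\theta},\sigma_t)$ and showing $\Phi$ is a $\sigma_s$-independent, $\theta_0\leftrightarrow\theta_1$-symmetric constant $C(\boldsymbol\theta)=\bigl[g_b(-2\theta_0+\tfrac{iQ}2)g_b(-2\theta_1+\tfrac{iQ}2)g_b(-2\theta_t+\tfrac{iQ}2)g_b(2\theta_\infty+\tfrac{iQ}2)\bigr]^{-2}$ --- is a clean and verifiably correct bookkeeping of the same cancellation.
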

\begin{proof}
The proof consists of writing an explicit relation between $F_{\text{ren}}\lb b,\boldsymbol\theta,\sigma_s,\sigma_t\rb$ and $F_{\text{ren}}\lb b,\hat{\boldsymbol\theta},\sigma_t,\sigma_s\rb$ using the definition \eqref{tildeF}  and the identity \eqref{identityF}. First, from \eqref{tildeF} we have
\beq\label{Frenhat}
F_{\text{ren}}\lb b,\hat{\boldsymbol\theta},\sigma_t,\sigma_s\rb= \omega\lb b,\boldsymbol\theta,\sigma_s\rb^{-1} \frac{N(\theta_\infty,\theta_0,\sigma_t) N(\sigma_t,\theta_t,\theta_1)}{N(\theta_\infty,\theta_1,\sigma_s) N(\sigma_s,\theta_t,\theta_0)} ~F\lb b,\hat{\boldsymbol\theta},\sigma_t,\sigma_s\rb.
\eeq
Substituting the identity \eqref{identityF} into \eqref{Frenhat}, we have
\beq
F_{\text{ren}}\lb b,\hat{\boldsymbol\theta},\sigma_t,\sigma_s\rb = \omega\lb b,\boldsymbol\theta,\sigma_s\rb^{-1} \frac{N(\theta_\infty,\theta_0,\sigma_t) N(\sigma_t,\theta_t,\theta_1)}{N(\theta_\infty,\theta_1,\sigma_s) N(\sigma_s,\theta_t,\theta_0)} ~ \frac{\alpha(\hat{\boldsymbol\theta},\sigma_t)\alpha(\hat{\boldsymbol\theta},-\sigma_t)}{\alpha(\boldsymbol\theta,\sigma_s)\alpha(\boldsymbol\theta,-\sigma_s)} F\lb b,\boldsymbol\theta,\sigma_s,\sigma_t\rb,
\eeq
where $\alpha$ is given in \eqref{alpha}. Using once again the definition \eqref{tildeF}, we obtain
\beq\label{5p20}\begin{split}
& F_{\text{ren}}\lb b,\hat{\boldsymbol\theta},\sigma_t,\sigma_s\rb = \omega\lb b,\boldsymbol\theta,\sigma_s\rb^{-1} \frac{N(\theta_\infty,\theta_0,\sigma_t) N(\sigma_t,\theta_t,\theta_1)}{N(\theta_\infty,\theta_1,\sigma_s) N(\sigma_s,\theta_t,\theta_0)} \\
& \times \frac{\alpha(\hat{\boldsymbol\theta},\sigma_t)\alpha(\hat{\boldsymbol\theta},-\sigma_t)}{\alpha(\boldsymbol\theta,\sigma_s)\alpha(\boldsymbol\theta,-\sigma_s)} ~ \omega\lb b,\hat{\boldsymbol\theta},\sigma_t\rb \frac{N(\theta_\infty,\theta_0,\sigma_t) N(\sigma_t,\theta_t,\theta_1)} {N(\theta_\infty,\theta_1,\sigma_s) N(\sigma_s,\theta_t,\theta_0)} ~ F_{\text{ren}}\lb b,\boldsymbol\theta,\sigma_s,\sigma_t\rb .
\end{split} \eeq
Finally, a direct computation using $s_b(x)=g_b(x)/g_b(-x)$ shows that all the prefactors in the right-hand side of \eqref{5p20} cancel out. Therefore the identity \eqref{identitytildeF} is obtained. \end{proof}

\subsection{Difference equations for $F_{\text{ren}}$}
We now show that the renormalized Virasoro fusion kernel and the renormalized Ruijsenaars hypergeometric function satisfy the same four difference equations. 
\begin{proposition}\label{prop2}
The renormalized fusion kernel $F_{\text{ren}}$ defined by \eqref{tildeF} satisfies the following difference equations:
\begin{subequations} \label{differencetildeF} \begin{align}
\label{differencetildeF1} &  H_{RvD}(b,b^{-1},L\boldsymbol\theta,\sigma_s) F_{\text{ren}}\lb b,\boldsymbol\theta,\sigma_s,\sigma_t\rb  = 2\operatorname{cosh}{(2\pi b \sigma_t)}~F_{\text{ren}}\lb b,\boldsymbol\theta,\sigma_s,\sigma_t\rb , \\
\label{differencetildeF2} & H_{RvD}(b^{-1},b,L\boldsymbol\theta,\sigma_s) F_{\text{ren}}\lb b,\boldsymbol\theta,\sigma_s,\sigma_t\rb = 2\operatorname{cosh}{(2\pi b^{-1} \sigma_t)}~F_{\text{ren}}\lb b,\boldsymbol\theta,\sigma_s,\sigma_t\rb , \\
\label{differencetildeF3} & H_{RvD}(b,b^{-1},L\hat{\boldsymbol\theta},\sigma_t) F_{\text{ren}}\lb b,\boldsymbol\theta,\sigma_s,\sigma_t\rb = 2\operatorname{cosh}{(2\pi b \sigma_s)}~F_{\text{ren}}\lb b,\boldsymbol\theta,\sigma_s,\sigma_t\rb , \\
\label{differencetildeF4} & H_{RvD}(b^{-1},b,L\hat{\boldsymbol\theta},\sigma_t) F_{\text{ren}}\lb b,\boldsymbol\theta,\sigma_s,\sigma_t\rb = 2\operatorname{cosh}{(2\pi b^{-1} \sigma_s)}~F_{\text{ren}}\lb b,\boldsymbol\theta,\sigma_s,\sigma_t\rb ,
\end{align}\end{subequations}
where the matrix $L$ is given in \eqref{L} and $H_{RvD}$ is defined by \eqref{hamiltonienR}.
\end{proposition}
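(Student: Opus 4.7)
The plan is to prove \eqref{differencetildeF1} directly from Proposition \ref{prop1} by recognizing $H_{RvD}$ as a similarity transform of $H_F$, and then to deduce the remaining three equations from the modular invariance \eqref{modulartildeF} and the self-duality \eqref{identitytildeF} just established for $F_\text{ren}$.

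For \eqref{differencetildeF1}: the normalization prefactor in \eqref{tildeF} splits as $F_\text{ren}=K(\sigma_t)\,\Phi(\sigma_s)\,F$, where $\Phi(\sigma_s):=N(\theta_\infty,\theta_1,\sigma_s)\,N(\sigma_s,\theta_t,\theta_0)$ and $K(\sigma_t)$ gathers the $\sigma_s$-independent factors, in particular $\omega(b,\hat{\boldsymbol\theta},\sigma_t)^{-1}$. Since $K(\sigma_t)$ commutes with the action of $H_{RvD}$ on $\sigma_s$, Proposition \ref{prop1} reduces \eqref{differencetildeF1} to the operator identity
\begin{equation*}
\Phi(\sigma_s)^{-1}\circ H_{RvD}(b,b^{-1},L\boldsymbol\theta,\sigma_s)\circ\Phi(\sigma_s)=H_F(b,\boldsymbol\theta,\sigma_s).
\end{equation*}
Both sides are second-order $ib$-difference operators in $\sigma_s$; their shift-free parts coincide by \eqref{VH0}, so the identity boils down to the two shift-coefficient equalities
\begin{equation*}
C(b,b^{-1},L\boldsymbol\theta,\mp\sigma_s)\,\frac{\Phi(\sigma_s\pm ib)}{\Phi(\sigma_s)}=H_F^+(b,\boldsymbol\theta,\pm\sigma_s).
\end{equation*}
Using the explicit form \eqref{normalizationR} and the functional equation \eqref{propgb} for $g_b$, each ratio $\Phi(\sigma_s\pm ib)/\Phi(\sigma_s)$ collapses into a finite product of hyperbolic sines and cosines; rewriting the $\Gamma$-quotient in \eqref{K} via the reflection and duplication formulas and combining with $C$ from \eqref{fonctionC}, both sides become purely trigonometric expressions that can be matched directly.

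The remaining three equations follow by symmetry. Equation \eqref{differencetildeF2} is obtained from \eqref{differencetildeF1} by replacing $b$ with $b^{-1}$, since the matrix $L$ is unchanged and $F_\text{ren}(b^{-1},\cdot)=F_\text{ren}(b,\cdot)$ by \eqref{modulartildeF}. For \eqref{differencetildeF3}, apply \eqref{differencetildeF1} with $\boldsymbol\theta$ replaced by $\hat{\boldsymbol\theta}$ and $(\sigma_s,\sigma_t)$ swapped, then invoke the self-duality \eqref{identitytildeF} to rewrite $F_\text{ren}(b,\hat{\boldsymbol\theta},\sigma_t,\sigma_s)=F_\text{ren}(b,\boldsymbol\theta,\sigma_s,\sigma_t)$. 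Equation \eqref{differencetildeF4} is deduced from \eqref{differencetildeF3} by the same modular step as for \eqref{differencetildeF2}. The main obstacle is the verification of the two trigonometric shift-coefficient identities above: it is the difference-operator analogue of the potential identity \eqref{VH0}, but heavier because $\Phi$ bundles eight $g_b$-factors whose $\pm ib$-shifts must recombine precisely into the four-$\cosh$ numerator of $C$ and the $\Gamma$-quotient in $H_F^+$, while one carefully tracks the sign rearrangements introduced by the matrix $L$.
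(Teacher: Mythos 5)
Your proposal is correct and follows essentially the same route as the paper: reduce everything to \eqref{differencetildeF1} via the symmetries \eqref{modulartildeF} and \eqref{identitytildeF}, and prove \eqref{differencetildeF1} by showing that conjugating $H_F$ by the $\sigma_s$-dependent normalization factor $N(\theta_\infty,\theta_1,\sigma_s)N(\sigma_s,\theta_t,\theta_0)$ turns it into $H_{RvD}(b,b^{-1},L\boldsymbol\theta,\sigma_s)$, the shift coefficients being matched through \eqref{propgb}, the resulting Gamma-function quotients, and the reflection formula, with the shift-free parts handled by \eqref{VH0}. This is precisely the paper's argument, up to writing the conjugation identity in the inverse but equivalent form.
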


\begin{proof}
We only need to prove \eqref{differencetildeF1} because of the symmetries \eqref{modulartildeF} and \eqref{identitytildeF} of $F_\text{ren}$. Moreover, it will be convenient to rewrite \eqref{tildeF} as $F=(P(\sigma_t)/Q(\sigma_s))F_{\text{ren}}$, where
\beq\begin{split}
&P(\sigma_t) = \omega\lb b,\hat{\boldsymbol\theta},\sigma_t\rb N(\theta_\infty,\theta_0,\sigma_t) N(\sigma_t,\theta_t,\theta_1), \\
&Q(\sigma_s) = N(\theta_\infty,\theta_1,\sigma_s) N(\sigma_s,\theta_t,\theta_0).
\end{split} \eeq
Substituting $F=(P(\sigma_t)/Q(\sigma_s))F_{\text{ren}}$ into the difference equation \eqref{differenceF1} satisfied by $F$, we obtain
 \beq\label{5p15}
\lb \frac{Q(\sigma_s)}{P(\sigma_t)}\circ H_F\lb b,\boldsymbol\theta,\sigma_s\rb\circ \frac{P(\sigma_t)}{Q(\sigma_s)} \rb F_{\text{ren}}\lb b,\boldsymbol\theta,\sigma_s,\sigma_t\rb = 2\operatorname{cosh}{\lb 2\pi b \sigma_t\rb} ~ F_{\text{ren}}\lb b,\boldsymbol\theta,\sigma_s,\sigma_t\rb.
\eeq
The factors $P(\sigma_t)$ in the left-hand side of \eqref{5p15} cancel out, since the difference operator $H_F\lb b,\boldsymbol\theta,\sigma_s\rb$ acts on the variable $\sigma_s$. Therefore, the proof of \eqref{differencetildeF1} consists of verifying the following identity:
\beq \label{htildehR1}
Q(\sigma_s)\circ H_F\lb b,\boldsymbol\theta,\sigma_s\rb \circ Q(\sigma_s)^{-1} = H_{RvD}(b,b^{-1},L\boldsymbol\theta,\sigma_s),
\eeq
where $H_F$ and $H_{RvD}$ are respectively defined by \eqref{HF} and \eqref{hamiltonienR}. We have
\beq\begin{split}\label{5p18}
& Q(\sigma_s)\circ H_F\lb b,\boldsymbol\theta,\sigma_s\rb\circ Q(\sigma_s)^{-1}  \\
     & = H_F^0\lb b,\boldsymbol\theta,\sigma_s\rb + \frac{Q(\sigma_s)}{Q(\sigma_s+ib)} H_F^+\lb b,\boldsymbol\theta,\sigma_s\rb e^{ib\partial_{\sigma_s}}+ \frac{Q(\sigma_s)}{Q(\sigma_s-ib)}H_F^+\lb b,\boldsymbol\theta,-\sigma_s\rb e^{-ib\partial_{\sigma_s}},
\end{split}\eeq
where $H_F^+$ is given in \eqref{K}. Using the identity \eqref{propgb} satisfied by the $g_b$-function, it is straightforward to verify that the following identities hold for $k=\pm 1$:
\begin{align} \nonumber
\frac{Q(\sigma_s)}{Q(\sigma_s+kib)} = &\; \frac{\Gamma \left(1-b^2+2 i b k \sigma_s\right) \Gamma (1+2 ib k \sigma_s)}{\Gamma \left(1+b^2-2 i b k \sigma_s \right) \Gamma \left(1+2 b^2-2 i b k \sigma_s\right)} 
	\\ \label{identityN}
& \times \prod _{\epsilon =\pm} \frac{\Gamma \left(\frac{b Q}{2}+ib \left(\epsilon \theta_0+\theta_t-k \sigma_s\right)\right) \Gamma \left(\frac{b Q}{2}+ib \left(\epsilon \theta_\infty+\theta_1-k \sigma_s\right)\right)}{\Gamma \left(1-\frac{b Q}{2}+ib \left(\epsilon \theta_0+\theta_t+k \sigma_s\right)\right) \Gamma \left(1-\frac{b Q}{2}+ib\left(\epsilon \theta_\infty+\theta_1+k \sigma_s\right)\right)}.
\end{align}
We now substitute \eqref{identityN} into \eqref{5p18} and we use the identity $\Gamma(1-z)\Gamma(z)=\frac{\pi}{\operatorname{sin}{(\pi z)}}$. For $k=\pm1$ we obtain
\beq
\frac{Q(\sigma_s)}{Q(\sigma_s+kib)} H_F^+\lb b,\boldsymbol\theta,k\sigma_s\rb = C(b,b^{-1},L\boldsymbol{\theta},-k\sigma_s),
\eeq
where $C$ is defined by \eqref{fonctionC}. It finally remains to use the identification \eqref{VH0} of the potentials $V$ and $H_F^0$ to obtain \eqref{htildehR1}. \end{proof}

We have showed that the renormalized Virasoro fusion kernel \eqref{tildeF} and the renormalized Ruijsenaars hypergeometric function \eqref{fonctionR} are joint eigenfunctions of the same four difference operators. Therefore the two functions are proportional. We are now going to show that they are actually equal.

\subsection{$F_{\text{ren}}=R_{\text{ren}}$}
We now present the main result of this article.
\begin{theorem}\label{theoremRF}
The renormalized fusion kernel $F_{\text{ren}}$ given in \eqref{tildeF} and the renormalized Ruijsenaars hypergeometric function defined by \eqref{fonctionR} are equal:
\beq \label{RF}
F_{\text{ren}}\lb b,\boldsymbol\theta,\sigma_s,\sigma_t\rb = R_{\text{ren}}(b,b^{-1},L\boldsymbol\theta,\sigma_s,\sigma_t),
\eeq
where the matrix $L$ is given in \eqref{L}.
\end{theorem}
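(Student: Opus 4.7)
The plan is to combine the joint-eigenfunction identification of Proposition~\ref{prop2} with a direct comparison of the hyperbolic Barnes integral representations of $F_{\text{ren}}$ and $R_{\text{ren}}$. By Proposition~\ref{prop2} together with the eigenvalue equations~\eqref{eigenvaluesR} and the parameter correspondence of Section~\ref{section5p1}, both sides of~\eqref{RF} are joint meromorphic eigenfunctions of the same four difference operators with identical eigenvalues. Hence the ratio $\rho:=F_{\text{ren}}/R_{\text{ren}}$ annihilates the associated homogeneous system, which, in each of $\sigma_s$ and $\sigma_t$, consists of two commuting second-order linear difference equations with incommensurate shifts $ib$ and $ib^{-1}$ under Assumption~\ref{assumption}. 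Consequently $\rho$ is $ib$- and $ib^{-1}$-periodic in $\sigma_s$ and $\sigma_t$; together with the moderate growth inherited from the explicit integrals, this forces $\rho=\rho(b,\boldsymbol\theta)$ to be independent of the internal momenta.

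To pin down this constant I would express both sides explicitly as a prefactor times the same hyperbolic Barnes integral. The proof of Proposition~\ref{lemma1} already recasts $F$ as $X_1\,\mathcal{B}_h(b,b^{-1},\omega\boldsymbol u)$; multiplying by the renormalization factor of~\eqref{tildeF} yields
\begin{equation*}
F_{\text{ren}}(b,\boldsymbol\theta,\sigma_s,\sigma_t) = \Phi(b,\boldsymbol\theta,\sigma_s,\sigma_t)\,\mathcal{B}_h(b,b^{-1},\omega\boldsymbol u),
\end{equation*}
with $\Phi$ an explicit finite product of $g_b$-factors. On the other side, \cite[Proposition~4.20]{BRS} supplies an analogous representation
\begin{equation*}
R_{\text{ren}}(b,b^{-1},L\boldsymbol\theta,\sigma_s,\sigma_t) = \Psi(b,\boldsymbol\theta,\sigma_s,\sigma_t)\,\mathcal{B}_h(b,b^{-1},\boldsymbol v)
\end{equation*}
for an explicit eight-tuple $\boldsymbol v$ and prefactor $\Psi$. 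If after the substitution $\boldsymbol\gamma=L\boldsymbol\theta$ the tuples $\omega\boldsymbol u$ and $\boldsymbol v$ coincide, then~\eqref{RF} reduces to the scalar identity $\Phi=\Psi$; if not, I would first apply the $W(D_4)\times W(D_4)$ symmetry of $\mathcal{B}_h$ used in the proof of Proposition~\ref{lemma1} to bring the tuples into a common orbit representative, absorbing the resulting $s_b$-factors into the $\Phi/\Psi$ comparison.

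The resulting identity $\Phi=\Psi$ is a purely algebraic statement about products of hyperbolic gamma functions evaluated at explicit linear combinations of $b,\boldsymbol\theta,\sigma_s,\sigma_t$. All $\sigma_s,\sigma_t$-dependence is guaranteed to cancel by the quasi-periodicity argument above, so one only needs to verify the $\boldsymbol\theta$-dependence (or even evaluate at one convenient value of $(\sigma_s,\sigma_t)$), and this follows by repeated use of the reflection and shift identities for $G$/$g_b$ listed in Appendix~\ref{appendixA} together with $s_b(z)=G(b,b^{-1},z)$. The main obstacle I expect is the sheer density of the bookkeeping: $\Phi$ assembles the ten $g_b$-factors of $X_1$, the four $g_b$-factors in each of the normalizations~\eqref{normalizationR}, the $s_b$-factors in the weight~\eqref{defomega}, and the $\tfrac12$ of~\eqref{4p6}, while $\Psi$ contains the $G$-prefactor of~\eqref{fonctionR}, the factor $1/\sqrt{a_+a_-}$ (which equals $1$ at $a_\pm=b^{\pm1}$), and any further $G$-factors introduced by the matching of~\cite[Prop.~4.20]{BRS}. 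The real labour is only in tracking signs and the $iQ/2$-shifts; the structural reason for the cancellation is already guaranteed by the eigenfunction argument.
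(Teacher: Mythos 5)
The computational half of your plan is essentially the paper's proof: the paper rewrites $F_{\text{ren}}$ (after a shift $x\to x-\tfrac{iQ}{2}$) as a prefactor times $\tfrac12\mathcal{B}_h(b,b^{-1},\boldsymbol v)$ for an explicit $\boldsymbol v\in\mathcal{G}_{iQ}$, applies the $\omega$-identity \eqref{identitybarnes}, and then matches the result term by term with \eqref{fonctionR} written out under the substitution $\boldsymbol\gamma=L\boldsymbol\theta$ (the matching of the integrals requires the reflection $z\to -z$ together with a translation of the integration variable, which is exactly the ``bring the tuples into a common orbit'' step you anticipate). So if you carry out the bookkeeping in full, you recover the paper's argument.

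The genuine gap is in the first half, which you then lean on to avoid doing that bookkeeping in full. From the fact that $F_{\text{ren}}$ and $R_{\text{ren}}$ satisfy the same second-order difference equation with the same eigenvalue it does \emph{not} follow that their ratio $\rho$ is $ib$-periodic: the solution space of a second-order difference equation is a two-dimensional module over the field of $ib$-periodic meromorphic functions, so two eigenfunctions with equal eigenvalue can be linearly independent and their ratio need not be periodic. (The ratio of two solutions satisfies no linear equation at all; only the difference of proportional solutions ``annihilates'' anything.) To conclude that $\rho$ depends only on $(b,\boldsymbol\theta)$ one must first show that the joint solution space of the four equations, within the relevant class of even meromorphic functions with controlled asymptotics, is one-dimensional --- an asymptotic/uniqueness analysis that you gesture at (``moderate growth inherited from the explicit integrals'') but do not supply, and which is not trivial. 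The paper sidesteps this entirely: Proposition \ref{prop2} is used only as motivation (``the two functions are proportional'' is stated informally and never invoked), and the equality \eqref{RF} is established purely by the direct Barnes-integral computation, including the full $(\sigma_s,\sigma_t)$-dependence of the prefactors. As written, your proposal's reduction of $\Phi=\Psi$ to a check at a single value of $(\sigma_s,\sigma_t)$ is therefore not justified; you must either prove the uniqueness statement for the joint eigenfunction problem or verify the prefactor identity in full, at which point the eigenfunction argument becomes superfluous.
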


\begin{proof}
The equality is not immediate and follows from the identity \eqref{identitybarnes} satisfied by the hyperbolic Barnes integral. Let us first rewrite $F_{\text{ren}}$ using \eqref{tildeF} and the representation \eqref{fusion01}. It will also be convenient to perform a shift $x\to x-\tfrac{iQ}2$ in the integrand. This amounts to lifting up all poles of the integrand by $\tfrac{iQ}2$. This also maps the contour $\mathsf{F}$ to a contour $\mathsf{F}'$ which runs from $-\infty$ and $+\infty$ and lies in the strip $\im x \in ]0,\tfrac{iQ}2[$. It is now straightforward to verify that $F_{\text{ren}}$ can be written as
\beq\label{tildeF2}\begin{split}
F_{\text{ren}}\lb b,\boldsymbol\theta,\sigma_s,\sigma_t\rb = & \prod_{\epsilon_1=\pm} \lb \frac{s_b(\epsilon_1\sigma_t-\theta_0-\theta_\infty)}{s_b(\epsilon_1\sigma_s+\theta_1-\theta_\infty)}\prod_{\epsilon_2=\pm} s_b(\epsilon_1\sigma_s+\epsilon_2\theta_0-\theta_t) \rb \\
& \times \int_{\mathsf{F}'} dx ~ \prod_{\epsilon=\pm} \frac{s_b(x-\tfrac{iQ}2+\epsilon \theta_1)s_b(x-\tfrac{iQ}2+\theta_\infty+\theta_t+\epsilon\theta_0)}{s_b(x+\theta_\infty+\epsilon\sigma_s)s_b(x+\theta_t+\epsilon\sigma_t)}.
 \end{split}\eeq
Recalling the definition \eqref{hyperbolicbarnes} of the hyperbolic Barnes integral and using the identity $s_b(z)=G(b,b^{-1},z)$, the contour integral in \eqref{tildeF2} takes the form
\beq
\int_{\mathsf{F}'} dx ~ \prod_{\epsilon=\pm} \frac{s_b(x-\tfrac{iQ}2+\epsilon \theta_1)s_b(x-\tfrac{iQ}2+\theta_\infty+\theta_t+\epsilon\theta_0)}{s_b(x+\theta_\infty+\epsilon\sigma_s)s_b(x+\theta_t+\epsilon\sigma_t)} = \frac12 \mathcal{B}_h(b,b^{-1},\boldsymbol{v}).
\eeq
Several choices of $\boldsymbol{v}\in \mathcal{G}_{iQ}$ lead to the same contour integral. Here we choose 
\beq \begin{split}
& v_1= \theta_t-\sigma_t, \quad v_2=\theta_t+\sigma_t, \quad v_3=\tfrac{i Q}{2}+\theta_1, \quad v_4=\tfrac{i Q}{2}-\theta_0-\theta_\infty-\theta_t, \\
& v_5=\tfrac{i Q}{2}-\theta_1, \quad v_6=\tfrac{i Q}{2}+\theta_0-\theta_\infty-\theta_t, \quad v_7=\theta_\infty-\sigma_s, \quad v_8=\theta_\infty+\sigma_s.\end{split}\eeq
From \eqref{actionomega} we have $s=\frac{\theta_0-\theta_1+\theta_\infty-\theta_t}2$, and the action of $\omega$ on the parameters $\boldsymbol{v}\in \mathcal{G}_{iQ}$ yields
\beq
\omega \boldsymbol v = (v_1+s,v_2+s,...,v_5-s,...,v_8-s).
\eeq
%
We now apply the identity \eqref{identitybarnes}. A straightforward computation leads to \beq\label{tildeF3}
F_{\text{ren}}\lb b,\boldsymbol\theta,\sigma_s,\sigma_t\rb = \frac12 \prod_{\epsilon=\pm} \frac{s_b(\epsilon\sigma_s-\theta_0-\theta_t)}{s_b(\epsilon\sigma_t+\theta_1+\theta_t)} ~ \mathcal{B}_h(b,b^{-1},\omega \boldsymbol{v}),
\eeq
where 
\beq\label{tildeF4}\begin{split}
\mathcal{B}_h(b,b^{-1},\omega \boldsymbol{v}) = 2 \int_{\mathsf{F}'} & dx ~ s_b\left(x-\tfrac{iQ}2-\tfrac{\theta_0}{2}-\tfrac{\theta_1}{2}+\tfrac{3 \theta_\infty}{2}+\tfrac{\theta_t}{2}\right) s_b\left(x-\tfrac{iQ}2+\tfrac{\theta_0}{2}+\tfrac{\theta_1}{2}+\tfrac{\theta_\infty}{2}+\tfrac{3 \theta_t}{2}\right) \\
& \times \prod _{\epsilon=\pm} \frac{s_b\left(x-\tfrac{iQ}2+\epsilon \left(\frac{\theta_0}{2}+\frac{\theta_1}{2}+\frac{\theta_\infty}{2}-\frac{\theta_t}{2}\right)\right)}{s_b\left(x-\frac{\theta_0}{2}+\frac{\theta_1}{2}+\frac{\theta_\infty}{2}+\frac{\theta_t}{2}+\epsilon \sigma_s\right) s_b\left(x+\frac{\theta_0}{2}-\frac{\theta_1}{2}+\frac{\theta_\infty}{2}+\frac{\theta_t}{2}+\epsilon \sigma_t\right)}.
\end{split}\eeq
On the other hand, the function $R_\text{ren}$ defined in \eqref{fonctionR} satisfies
\beq\label{fonctionR2}\begin{split}
R_{\text{ren}}(b,&b^{-1},L\boldsymbol\theta,\sigma_s,\sigma_t) = \prod_{\epsilon=\pm1} \frac{s_b(\epsilon\sigma_s-\theta_0-\theta_t)}{s_b(\epsilon\sigma_t+\theta_1+\theta_t)} \\
 & \times \int_{\mathsf{R}}dz~ \frac{1}{s_b(z+\tfrac{i Q}{2}) s_b(z+2 \theta_t+\tfrac{i Q}{2})}\prod_{\epsilon=\pm1}\frac{s_b(z+\theta_0+\theta_t+\epsilon\sigma_s) s_b(z+\theta_1+\theta_t+\epsilon\sigma_t)}{s_b\left(z+\theta_0+\theta_1+\epsilon\theta_\infty+\theta_t+\tfrac{i Q}{2}\right)},
 \end{split}\eeq
where the contour $\mathsf{R}$ runs from $-\infty$ to $+\infty$, lying in the strip $\im z \in ]-\tfrac{iQ}2,0[$. 

The prefactors in \eqref{tildeF3} and \eqref{fonctionR2} coincide. Therefore it remains to identify the contour integral in \eqref{tildeF4} with the one in \eqref{fonctionR2}. Performing in \eqref{tildeF4} the change of variable $x=-z$ which maps the contour $\mathsf{F}'$ to $\mathsf{R}$ and using $s_b(z)=s_b(-z)^{-1}$, we obtain
\beq\label{tildeF5}\begin{split}
\mathcal{B}_h(b,b^{-1},\omega \boldsymbol{v}) = 2 \int_{\mathsf{R}} & dz ~ s_b\left(z+\tfrac{iQ}2+\tfrac{\theta_0}{2}+\tfrac{\theta_1}{2}-\tfrac{3 \theta_\infty}{2}-\tfrac{\theta_t}{2}\right)^{-1} s_b\left(z+\tfrac{iQ}2-\tfrac{\theta_0}{2}-\tfrac{\theta_1}{2}-\tfrac{\theta_\infty}{2}-\tfrac{3 \theta_t}{2}\right)^{-1} \\
& \times \prod _{\epsilon=\pm} \frac{s_b\left(z+\frac{\theta_0}{2}-\frac{\theta_1}{2}-\frac{\theta_\infty}{2}-\frac{\theta_t}{2}+\epsilon \sigma_s\right) s_b\left(z-\frac{\theta_0}{2}+\frac{\theta_1}{2}-\frac{\theta_\infty}{2}-\frac{\theta_t}{2}-\epsilon \sigma_t\right)}{s_b\left(z+\tfrac{iQ}2+\epsilon \left(\frac{\theta_0}{2}+\frac{\theta_1}{2}+\frac{\theta_\infty}{2}-\frac{\theta_t}{2}\right)\right)}.
\end{split}\eeq
It finally remains to perform $z\to z+\frac{\theta_1}2+\frac{\theta_0}2+\frac{\theta_\infty}2+\frac{3\theta_t}2$ in \eqref{tildeF5} to obtain \eqref{fonctionR2}.
\end{proof}

\section{Conclusion and perspectives}

In this article we have proved that the Virasoro fusion kernel is a joint eigenfunction of four difference operators. We have found a normalization of the conformal blocks for which the four difference operators are mapped to four versions of the quantum relativistic hyperbolic $BC_1$ Calogero-Moser Hamiltonian. We have consequently proved that the Virasoro fusion kernel and the Ruijsenaars hypergeometric function coincide up to normalization and are the quantum eigenfunction of this integrable system. We now mention a list of perspectives related to this work.

\begin{enumerate}
\item It would be interesting to understand the role played by the four-point Virasoro conformal blocks in the context of the present integrable system. In view of \eqref{s-t-tilde} and Theorem \ref{theoremRF}, the renormalized $R$-function can be seen as the kernel of the fusion transformation relating the s- and t-channel renormalized conformal blocks. On the other hand, in \cite{R2003} a unitary Hilbert space transform associated to the function $R_\text{ren}$ was constructed for special values of the couplings. We believe that this Hilbert space is in fact spanned by the four-point Virasoro conformal blocks.

\item The quantum eigenfunction of the $A_1$ relativistic hyperbolic CM system is a one-coupling specialization of the Ruijsenaars hypergeometric function \cite{R2011}. It would be interesting to compare this limit to the transition limit from the Virasoro fusion kernel to the Virasoro modular kernel \cite{HJS}. The latter describes how conformal blocks associated to the one-point torus transform under a mapping class group action.

\item A natural question is to find higher rank generalizations of our result. The $BC_N$, $N>1$ generalization of the function $R_\text{ren}$ have not yet been constructed. Higher rank analogs of the Virasoro fusion kernel are associated to $\mathcal{W}$-type algebras and also remain to be found. However, the framework developed in \cite{SS1,SS2} allows us, in principle, to construct such a generalization from a quantum group perspective.

\item What is the meaning of the classical and/or non-relativistic limits in the conformal blocks setting?

\item The $R$-function reduces to the celebrated Askey-Wilson polynomials in a limit where one of $v, \hat{v}$ is discretized \cite{R1999}. What does this limit mean from the conformal blocks viewpoint?

\item The Askey-Wilson polynomials can be studied using representation theory of the double affine hecke algebra and the Askey-Wilson algebra \cite{NS,KM}. An interesting program would be to generalize this algebraic study to the case of $R_\text{ren}$. We believe that Virasoro conformal blocks and their quantum monodromies provide the correct framework for such a study.

\item Various confluent limits of the Virasoro fusion kernel were constructed in \cite{LR}. It would be interesting to understand these limits from an integrable system point of view.

\end{enumerate}

\appendix
\section{Special functions} \label{appendixA}
Equation \eqref{fonctionR} expresses the renormalized Ruijsenaars hypergeometric function in terms of the hyperbolic gamma function $G(a_-,a_+,z)=E(a_-,a_+,z)/E(a_-,a_+,-z)$. Let us omit the dependence of $G$ and $E$ on $a_-$ and $a_+$ for simplicity. Following \cite{R1999}, the two functions are defined by
\begin{equation}\label{defsb}
G(z)=\operatorname{exp}{\left[i \int_0^\infty \frac{dy}{y} \left(\frac{\operatorname{sin}{2yz}}{2\operatorname{sinh}{(a_+y)}\operatorname{sinh}{(a_-y)}}-\frac{z}{a_+ a_- y}\right)\right]}, \qquad |\im z|<\frac{a_++a_-}{2},
\end{equation}
and
\beq \label{gb}
E(z)=\operatorname{exp}\left[\frac{1}4 \int_0^\infty \frac{dy}{y}\left(\frac{1-e^{-2iyz}}{\operatorname{sinh}{(a_+ y)} \operatorname{sinh}{(a_- y)}}-\frac{2iz}{a_+a_- y} - \frac{z^2}{a_+ a_-}(e^{-2a_+y}+e^{-2a_-y})\right)\right].
\eeq
The functions $G$ and $E$ are obviously invariant under the exchange of $a_-$ and $a_+$. Moreover, $G$ satisfies the difference equations
\beq
\frac{G(z+ia_-/2)}{G(z-ia_-/2)}=2\operatorname{cosh}{(\pi z/a_+)}, \qquad \frac{G(z+ia_+/2)}{G(z-ia_+/2)}=2\operatorname{cosh}{(\pi z/a_-)}.
\eeq
As long as the ratio $a_-/a_+$ stays in the subset $\mathbb{C} \setminus (-\infty, 0]$, the function $G$ extends to a meromorphic function with poles and zeros given by
\beq\label{polesGhyp}
z=-i(k+\tfrac12)a_--i(l+\tfrac12)a_+, \qquad k,l \in \mathbb{N}, \qquad (\text{poles}), 
\eeq
\beq\label{zeroGhyp}
z=i(k+\tfrac12)a_-+i(l+\tfrac12)a_+, \qquad k,l \in \mathbb{N}, \qquad (\text{zeros}).
\eeq 
Finally the function $G$ is scale invariant:
\beq\label{scale}
G(\lambda a_-, \lambda a_+, \lambda z)=G(a_-,a_,z), \quad \lambda > 0.
\eeq
The function $E$ satisfies the difference equation
\beq\label{A6}
\frac{E(z+ia_-/2)}{E(z-ia_-/2)}=\frac{\sqrt{2\pi}}{\Gamma(\tfrac12+\frac{iz}{a_+})} \lb \tfrac{a_-}{a_+}\rb^{\tfrac{iz}{2a_+}},
\eeq
and the difference equation obtained by exchanging $a_-$ and $a_+$ in \eqref{A6}. The function $E$ is holomorphic as long as $a_-/a_+ \in \mathbb{C} \setminus (-\infty, 0]$. Moreover, $E$ has zeros located at
\beq
z=i(k+\tfrac12)a_-+i(l+\tfrac12)a_+, \qquad k,l \in \mathbb{N}.
\eeq

On the other hand, the Virasoro fusion kernel \eqref{fusion01} is defined in terms of two special functions $s_b(z)$ and $g_b(z)$ such that $s_b(z)=g_b(z)/g_b(-z)$. They are related to the functions $G$ and $E$ as follows:
\begin{align}\label{sbgbGE}
s_b(z) = G(b, b^{-1}; z), \qquad g_b(z) = \frac{1}{E(b, b^{-1}; -z)}.
\end{align}
The properties of $s_b$ and $g_b$ can be deduced from the ones of $G$ and $E$. Both $s_b$ and  $g_b$ are obviously invariant under the exchange of $b$ and $b^{-1}$. Most importantly, the function $g_b(z)$ satisfies the difference equations
\beq \label{propgb}
\frac{g_b \lb z+\frac{ib}{2}\rb}{g_b\lb z-\frac{ib}{2}\rb}=\frac{b^{-ibz}\sqrt{2\pi}}{\Gamma \lb \frac{1}{2}-ibz \rb}, \qquad
\frac{g_b \lb z+\frac{i}{2b}\rb}{g_b\lb z-\frac{i}{2b}\rb}=\frac{b^{-\frac{iz}{b}}\sqrt{2\pi}}{\Gamma \lb \frac{1}{2}- \frac{iz}{b} \rb}.
\eeq
Finally, the $s_b$ function satisfies 
\begin{equation}\label{differencesb}
\frac{s_b(z+\frac{ib}{2})}{s_b(z-\frac{ib}{2})}=2\operatorname{cosh}{\pi b z}, \qquad \frac{s_b(z+\frac{i}{2b})}{s_b(z-\frac{i}{2b})}=2\operatorname{cosh}{\frac{\pi z}{b}}.
\end{equation}

\section{Hyperbolic Barnes integral}\label{appendixB}

In this appendix we define the hyperbolic Barnes integral $\mathcal B_h(a_-,a_+,\boldsymbol{u})$ following \cite{BRS}. We describe one of its symmetry properties which is the key to the proofs of Proposition \ref{lemma1} and Theorem \ref{theoremRF}. 

Define the complex hyperplane $\mathcal{G}_k$ such that
\begin{equation}\label{hyperplane}
\mathcal{G}_k = \{ u=(u_1,...u_8) \in \mathbb{C}^8 | \sum_{j=1}^8 u_j=2k \}.
\end{equation}
The hyperbolic Barnes integral $\mathcal B_h(a_-,a_+,\boldsymbol{u})$ is defined for $\boldsymbol{u} \in \mathcal{G}_{2ia}$ and $a=\tfrac{a_-+a_+}2$ by
\begin{equation}\label{hyperbolicbarnes}
\mathcal B_h(a_-,a_+,\boldsymbol{u}) = 2 \int_{\mathcal{C}} \frac{\prod_{j=3}^6 G\left(z-u_j\right)}{\prod_{j=1,2,7,8}^6 G\left(z+u_j\right)}dz,
\end{equation}
where the contour $\mathcal{C}$ goes from $-\infty$ and $+\infty$, passing in between the upper and lower sequences of poles of the integrand. 
\begin{lemma}\cite[Proposition 4.7]{BRS} 
Let $\omega$ be an operator acting on $\boldsymbol u \in \mathcal{G}_{2ia}$ as
\begin{equation}\label{actionomega}
\omega \boldsymbol{u} = (u_1+s,...,u_4+s,u_5-s,...,u_8-s), \qquad s=ia-\frac{1}{2}(u_1+u_2+u_3+u_4).
\end{equation}
The hyperbolic Barnes integral $B_h(a_-,a_+,\boldsymbol{u})$, $\boldsymbol{u} \in \mathcal{G}_{2ia}$ is invariant under permutations of $(u_1, u_2, u_7,u_8)$ and of $(u_3,u_4,u_5,u_6)$, and it satisfies\footnote{A similar identity using a degeneration limit of the Spiridonov's $V$-function was obtained in \cite{TV14}.}
\begin{equation}\label{identitybarnes}
\mathcal B_h(a_-,a_+,\boldsymbol{u})=\mathcal B_h(a_-,a_+,\omega \boldsymbol{u})\displaystyle\prod_{j=1,2}\prod_{k=3,4}G\left(i a-u_j-u_k\right)\displaystyle\prod_{j=5,6}\prod_{k=7,8}G\left(ia-u_j-u_k\right).
\end{equation}
\end{lemma}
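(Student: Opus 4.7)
The first assertion---that $\mathcal{B}_h$ is symmetric under permutations of $(u_1,u_2,u_7,u_8)$ and of $(u_3,u_4,u_5,u_6)$---is immediate from \eqref{hyperbolicbarnes}: within each quadruple the parameters appear symmetrically in the integrand, and the contour $\mathcal{C}$ is insensitive to such relabelings.

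For the nontrivial identity \eqref{identitybarnes}, I would pursue an analytic difference-equation argument. Introduce the ratio
\[
\Phi(\boldsymbol u) := \frac{\mathcal{B}_h(a_-,a_+,\boldsymbol u)}{\mathcal{B}_h(a_-,a_+,\omega \boldsymbol u)\,\prod_{j=1,2}\prod_{k=3,4}G(ia-u_j-u_k)\,\prod_{j=5,6}\prod_{k=7,8}G(ia-u_j-u_k)}
\]
and aim to prove $\Phi \equiv 1$ on $\mathcal{G}_{2ia}$. The first task is to check that the a priori singular loci of numerator and denominator cancel: poles of $\mathcal{B}_h$ in the parameters arise from pinchings of $\mathcal{C}$ between colliding sequences of $G$-poles (cf.\ \eqref{polesGhyp}), and these must match the explicit poles/zeros of the $G$-prefactors. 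This gives $\Phi$ as a meromorphic function on $\mathcal{G}_{2ia}$, holomorphic in a suitable open region of parameter space.

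The core of the argument is to establish periodicity of $\Phi$. A convenient shift preserving $\mathcal{G}_{2ia}$ is $(u_1,u_2) \mapsto (u_1 + i a_-, u_2 - i a_-)$: since $s$ depends only on $u_1 + u_2 + u_3 + u_4$ and this sum is preserved, the shift acts identically on $\boldsymbol u$ and on $\omega \boldsymbol u$. Using the difference equation $G(z + i a_-/2)/G(z - i a_-/2) = 2\cosh(\pi z/a_+)$, a contour shift expresses $\mathcal{B}_h$ at the shifted argument as $\mathcal{B}_h(\boldsymbol u)$ times an explicit single-variable multiplier; the same type of multiplier appears for $\mathcal{B}_h(\omega\boldsymbol u)$, and the change in the $G$-prefactor yields a third such factor computed directly. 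The claim to verify is that these three multipliers cancel, giving $\Phi(\cdot + i a_-) = \Phi$; the modular symmetry $G(a_-,a_+,z) = G(a_+,a_-,z)$ supplies the corresponding $i a_+$-periodicity, and iterating through the other admissible pair-shifts produces periodicity of $\Phi$ under a large abelian group of translations preserving $\mathcal{G}_{2ia}$.

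With both $i a_\pm$-periodicities in hand, a Liouville-type argument concludes: for $a_-/a_+$ irrational the periods $i a_\pm$ generate a dense subgroup of $i\mathbb{R}$, so a bounded meromorphic function with both periods must be constant (by density together with the Cauchy--Riemann equations), and rational ratios follow by continuity; boundedness of $\Phi$ in a vertical strip follows from the asymptotics of $G$ encoded in \eqref{defsb}. The constant is then pinned to $1$ by specializing $\boldsymbol u$ at a point where both sides can be evaluated---for instance, letting one $u_j \to +\infty$ along a ray where the Barnes integral degenerates explicitly via the asymptotics of $G$. The principal obstacle I anticipate is this boundedness estimate: the integrand of $\mathcal{B}_h$ carries both exponentially growing and decaying $G$-factors that must balance against the explicit prefactor, and contour deformations for $\mathcal{B}_h(\omega \boldsymbol u)$ must be carried out consistently with those for $\mathcal{B}_h(\boldsymbol u)$ without crossing unaccounted-for poles. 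An alternative, more conceptual route is to derive \eqref{identitybarnes} as a hyperbolic degeneration of one of the $W(E_7)$-symmetries of Spiridonov's elliptic hypergeometric $V$-function, in which case the main technical step shifts to justifying the interchange of the degeneration limit and the contour integration.
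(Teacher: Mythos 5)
The paper offers no proof of this lemma: it is quoted verbatim from \cite[Proposition 4.7]{BRS}, so there is no internal argument to compare yours against. Your first observation is fine --- the invariance under permutations of $(u_1,u_2,u_7,u_8)$ and of $(u_3,\dots,u_6)$ is indeed immediate from \eqref{hyperbolicbarnes}, since the two quadruples enter the integrand only through the symmetric products $\prod G(z+u_j)$ and $\prod G(z-u_j)$.

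Your main route to \eqref{identitybarnes}, however, breaks down at its central step. You assert that the shift $(u_1,u_2)\mapsto(u_1+ia_-,\,u_2-ia_-)$ expresses the shifted $\mathcal{B}_h$ as $\mathcal{B}_h(\boldsymbol u)$ times ``an explicit single-variable multiplier.'' It does not: by the difference equation for $G$, this shift multiplies the \emph{integrand} by the $z$-dependent factor $\cosh\bigl(\pi(z+u_2-\tfrac{i a_-}{2})/a_+\bigr)/\cosh\bigl(\pi(z+u_1+\tfrac{i a_-}{2})/a_+\bigr)$, so the shifted integral is not a scalar multiple of the original one. What parameter shifts of Barnes-type integrals actually yield are contiguous (three-term) relations linking $\mathcal{B}_h$ at neighbouring parameter values; consequently the ratio $\Phi$ has no evident $ia_\pm$-periodicity and the Liouville argument cannot even begin. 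Rescuing a difference-equation proof would require showing that both sides of \eqref{identitybarnes} satisfy the same Askey--Wilson-type contiguous relations and then invoking a uniqueness statement for joint solutions with prescribed analyticity and asymptotics --- a far heavier apparatus than a bounded-doubly-periodic-function argument. Your ``alternative, more conceptual route'' is in fact the one taken in \cite{BRS}: the identity is inherited, under degeneration, from the $W(E_7)$-type symmetry at the level of the hyperbolic (ultimately Spiridonov's elliptic) hypergeometric function, with the interchange of limit and integration controlled by uniform bounds on $G$. That is the argument you should develop; the periodicity scheme as written does not close.
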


\bigskip
\noindent
{\bf Acknowledgement} {\it The author would like to thank P. Baseilhac, E. Langmann, S. Ribault and S. Ruijsenaars for valuable discussions and encouragements. The author is also grateful to O. Lisovyy for many helpful discussions and for his comments on a first version of the manuscript. Finally, the author is indebted to J. Lenells for his encouragements and his careful checks on the latest version of this paper. Support is acknowledged from the European Research Council, Grant Agreement No. 682537.}

\end{document}